\newcommand{\al}{\alpha}
\newcommand{\bt}{\beta}
\newcommand{\no}{\nonumber}
\newcommand{\bs}{\begin{small}}
\newcommand{\es}{\end{small}}
\newcommand{\be}{\begin{equation}}
\newcommand{\ee}{\end{equation}}
\newcommand{\bea}{\begin{eqnarray}}
\newcommand{\eea}{\end{eqnarray}}
\begin{document}

\title*{Global and Local Scaling Limits for Linear Eigenvalue Statistics of Jacobi $\beta$-Ensembles}
\titlerunning{Linear Eigenvalue Statistics of Jacobi $\beta$-Ensembles}
\author{Chao Min and Yang Chen}
\institute{Chao Min \at Chao Min, School of Mathematical Sciences, Huaqiao University, Quanzhou 362021, China, \email{chaomin@hqu.edu.cn}
\and Yang Chen \at Yang Chen, Department of Mathematics, Faculty of Science and Technology, University of Macau, Macau, China \email{yangbrookchen@yahoo.co.uk}}
%
%
\maketitle

\vspace*{-1cm}
\centerline{\emph{Dedicated to the memory of Harold Widom}}

\vspace{1cm}

\abstract*{We study the moment-generating functions (MGF) for linear eigenvalue statistics of Jacobi unitary, symplectic and orthogonal ensembles. By expressing the MGF as Fredholm determinants of kernels of finite rank, we show that the mean and variance of the suitably scaled linear statistics in these Jacobi ensembles are related to the sine kernel in the bulk of the spectrum, whereas they are related to the Bessel kernel at the (hard) edge of the spectrum. The relation between the Jacobi symplectic/orthogonal ensemble (JSE/JOE) and the Jacobi unitary ensemble (JUE) is also established.}

\abstract{We study the moment-generating functions (MGF) for linear eigenvalue statistics of Jacobi unitary, symplectic and orthogonal ensembles. By expressing the MGF as Fredholm determinants of kernels of finite rank, we show that the mean and variance of the suitably scaled linear statistics in these Jacobi ensembles are related to the sine kernel in the bulk of the spectrum, whereas they are related to the Bessel kernel at the (hard) edge of the spectrum. The relation between the Jacobi symplectic/orthogonal ensemble (JSE/JOE) and the Jacobi unitary ensemble (JUE) is also established.}

\bigskip
\noindent{\bf Keywords:}  Linear eigenvalue statistics; Jacobi $\beta$-ensembles; Moment-generating function; Mean and variance; Sine kernel; Bessel kernel.

\medskip
\noindent{\bf Mathematics Subject Classification (2020):} 60B20, 47A53, 33C45.

\section{Introduction}
In random matrix theory (RMT), the joint probability density function for the (real) eigenvalues $\{x_j\}_{j=1}^{N}$ of $N\times N$ Hermitian matrices from a matrix ensemble
is given by \cite{Mehta}
\be\label{jpdf}
P_{N}^{(\beta)}(x_{1},x_{2},\ldots,x_{N})=\frac{1}{Z_{N}}\prod_{1\leq j<k\leq N}\left|x_{j}-x_{k}\right|^{\beta}\prod_{j=1}^{N}w(x_{j}),
\ee
where $\bt=1, 2$ and $4$ (the Dyson index) correspond to the orthogonal, unitary and symplectic ensembles respectively, $w(x)$ is a weight function and $Z_{N}$ is a normalization constant.
If $w(x)=\mathrm{e}^{-x^2},\; x\in \mathbb{R}$ and $w(x)=x^{\al}\mathrm{e}^{-x},\;  x\in \mathbb{R}^{+},\al>-1$, these are the Gaussian $\bt$-ensembles (G$\bt$E) and Laguerre $\bt$-ensembles (L$\bt$E). See also \cite{Widom} on the relation between orthogonal, symplectic and unitary ensembles.

Linear statistics is an important research object in RMT and has various applications; see, e.g., \cite{Bao,Beenakker1,Beenakker2,Chen1998,Chen1994,Cunden,GMT,KL,Li,Vivo}. In previous works \cite{Min201601,Min2020}, the authors studied the large $N$ asymptotics for the moment-generating functions (MGF) of the suitably scaled linear statistics in G$\bt$E and L$\bt$E, from which the mean and variance of the linear statistics are derived. In the present paper, we focus on the problem in Jacobi $\bt$-ensembles (J$\bt$E). In this case, the weight function is $w(x)=(1-x)^a(1+x)^b,\; x\in[-1,1],\; a, b>-1$.

The MGF of the linear statistics $\sum_{j=1}^{N}F(x_{j})$ in J$\bt$E is given by the mathematical expectation with respect to the joint probability density function (\ref{jpdf}),
\be\label{mgf}
\mathbb{E}\left(\mathrm{e}^{-\lambda\sum_{j=1}^{N}F(x_j)}\right)
=\frac{\int_{[-1,1]^{N}}\prod_{1\leq j<k\leq N}\left|x_{j}-x_{k}\right|^{\beta}\prod_{j=1}^{N}w(x_{j})\mathrm{e}^{-\lambda F(x_j)}dx_{j}}{\int_{[-1,1]^{N}}\prod_{1\leq j<k\leq N}\left|x_{j}-x_{k}\right|^{\beta}\prod_{j=1}^{N}w(x_{j})dx_{j}},
\ee
where $\lambda$ is a parameter and $F(\cdot)$ is a sufficiently well-behaved function to make the integral well-defined.
Similarly as in \cite{Min201601,Min2020}, we write the right-hand side of (\ref{mgf}) in the form
\be\label{gnb}
G_{N}^{(\beta)}(f):=\frac{\int_{[-1,1]^{N}}\prod_{1\leq j<k\leq N}\left|x_{j}-x_{k}\right|^{\beta}
\prod_{j=1}^{N}w(x_{j})\left(1+f(x_{j})\right)dx_{j}}{\int_{[-1,1]^{N}}\prod_{1\leq j<k\leq N}\left|x_{j}-x_{k}\right|^{\beta}\prod_{j=1}^{N}w(x_{j})dx_{j}},
\ee
where
\be\label{re}
f(x)=\mathrm{e}^{-\lambda F(x)}-1.
\ee
The denominator in (\ref{mgf}) or (\ref{gnb}) is known as Selberg's integral, which has closed form expression \cite[(17.6.1)]{Mehta}. We are interested in the large $N$ asymptotics of the MGF. It is well known that the distributions of linear statistics in random matrix ensembles are Gaussian; see, e.g., \cite{Chen1994,Politzer}.

We first consider the $\bt=2$ case, which is the simplest among the three cases. From the well-known result of Tracy and Widom \cite{Tracy1998} by expressing $G_{N}^{(2)}(f)$ as a Fredholm determinant, we obtain its large $N$ asymptotics. With the relation of $f(x)$ and $F(x)$, we compute the mean and variance of linear statistics $\sum_{j=1}^{N}F(x_{j})$ in the bulk of the spectrum and at the edge respectively. It can be seen that in the bulk of the spectrum the results are related to the sine kernel, while at the edge they are related to the Bessel kernel. The mean and variance of the linear statistics in unitary ensembles have been studied a lot; see, e.g., \cite{Basor1997,Basor1993,Chen1998,Min2016}. So the mean and variance in the $\bt=2$ case can also be obtained by using other approaches. Our main goal of this paper is to obtain the mean and variance in the $\bt=4$ and $\bt=1$ cases for Jacobi ensembles. We show the results of the $\bt=2$ case for reference and apply the method to the $\bt=4$ and $\bt=1$ cases.

For the $\bt=4$ case, we apply the previous results for general weight functions in \cite{Min201601,Min2020} to the Jacobi weight. By making use of the skew orthogonal polynomials for the Jacobi weight \cite{Adler2000}, we express $G_{N}^{(4)}(f)$ as a Fredholm determinant involving the Christoffel-Darboux kernel. The large $N$ asymptotics of $G_{N}^{(4)}(f)$ is derived by using the trace-log expansions. The mean and variance of the scaled linear statistics $\sum_{j=1}^{N}F(x_{j})$ then follows and the relation between the $\bt=4$ case and $\bt=2$ case is built.

The $\bt=1$ case is more difficult to deal with, and we only consider the case when $N$ is even. Usually in this situation the weight is taken to be the square root of the weight considered in the $\bt=2$ case, so we let $w(x)=(1-x)^{a/2}(1+x)^{b/2},\; x\in[-1,1],\; a, b>-2$. The following development is similar to the $\bt=4$ case, but with more complicated computations. Finally we obtain the mean and variance of the scaled linear statistics $\sum_{j=1}^{N}F(x_{j})$ and establish the relation between the $\bt=1$ case and $\bt=2$ case. Note that as in the $\bt=2$ case we also consider the $\bt=4$ and $\bt=1$ cases in the bulk of the spectrum and at the edge, and the results are related to the sine kernel and Bessel kernel, respectively.

We would like to point out that in this paper some calculations on the asymptotics are heuristic. To be specific, we always substitute the asymptotic expressions of the traces into the trace-log expansions for the MGF and do not care much about the error terms. So the error estimates in the asymptotic analysis should be made more precisely, such as the errors in the mean and variance formulas of the scaled linear statistics obtained in the following sections.

\section{Jacobi Unitary Ensemble (JUE)}
In this section, we consider the $\bt=2$ case, which is the simplest case and provides a comparison to the $\bt=4$ and $\bt=1$ cases.
\subsection{Finite $N$ Case for the MGF in JUE}
Recall that the weight function is $w(x)=(1-x)^{a}(1+x)^{b},\;x\in[-1,1],\;a, b>-1$. Let $\{\varphi_{j}(x)\}_{j=0}^{\infty}$ be the sequence obtained by orthonormalizing the sequence $\{x^{j}(1-x)^{a/2}(1+x)^{b/2}\}$ in $L^2[-1,1]$ and
\be\label{kn2}
K_{N}^{(2)}(x,y):=\sum_{j=0}^{N-1}\varphi_{j}(x)\varphi_{j}(y).
\ee
In fact,
$$
\varphi_{j}(x)=\frac{1}{\sqrt{h_j^{(a,b)}}}P_j^{(a,b)}(x)(1-x)^{a/2}(1+x)^{b/2},
$$
where $P_j^{(a,b)}(x)$ is the Jacobi polynomial of degree $j$ with the orthogonality \cite{Ismail,Szego}
$$
\int_{-1}^{1}P_j^{(a,b)}(x)P_k^{(a,b)}(x)(1-x)^a(1+x)^b dx=h_j^{(a,b)}\delta_{jk},\qquad j, k=0,1,2,\ldots
$$
and
$$
h_j^{(a,b)}=\frac{2^{a+b+1}\Gamma(j+a+1)\Gamma(j+b+1)}{j!(2j+a+b+1)\Gamma(j+a+b+1)}.
$$

Tracy and Widom \cite{Tracy1998} proved that $G_{N}^{(2)}(f)$ can be expressed as a Fredholm determinant
$$
G_{N}^{(2)}(f)=\det\left(I+K_{N}^{(2)}f\right),
$$
where $K_{N}^{(2)}$ is the operator on $L^2[-1,1]$ with kernel $K_{N}^{(2)}(x,y)$ given by (\ref{kn2}),
and $f$ denotes the operator of multiplication by $f$. In addition, it is well known that
\bea\label{logdet}
&&\log\det\left(I+K_{N}^{(2)}f\right)
=\mathrm{Tr}\log\left(I+K_{N}^{(2)}f\right)\no\\
&&=\mathrm{Tr}\:K_{N}^{(2)}f-\frac{1}{2}\mathrm{Tr}\left(K_{N}^{(2)}f\right)^{2}+\frac{1}{3}\mathrm{Tr}\left(K_{N}^{(2)}f\right)^{3}-\cdots.
\eea
This formula will help us to analyze the large $N$ asymptotics of $G_{N}^{(2)}(f)$ in the following subsections.

\subsection{Scaling in the Bulk of the Spectrum in JUE}
In this subsection, we study the large $N$ asymptotics of $G_{N}^{(2)}(f)$ in the bulk of the spectrum for the JUE, and obtain the mean and variance of the suitably scaled linear statistics.
We state a theorem before our discussion.
\begin{theorem}\label{jue1}
For $x, y\in\mathbb{R}$, we have as $N\rightarrow\infty$,
$$
\frac{1}{N}K_{N}^{(2)}\left(\frac{x}{N},\frac{y}{N}\right)
=K_{\mathrm{sine}}(x,y)+O(N^{-1}),
$$
where $K_{\mathrm{sine}}(x,y)$ is the sine kernel defined by
\be\label{sine}
K_{\mathrm{sine}}(x,y):=\frac{\mathrm{sin}(x-y)}{\pi(x-y)}.
\ee
The error term is uniform for $x$ and $y$ in compact subsets of $\mathbb{R}$.
\end{theorem}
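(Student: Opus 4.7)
The plan is to reduce the finite sum defining $K_{N}^{(2)}$ to a two-term expression via Christoffel--Darboux, and then insert the classical bulk (Darboux/Hilb) asymptotics for Jacobi polynomials near $x=0$.

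First I would apply the Christoffel--Darboux identity to (\ref{kn2}), obtaining
$$
K_{N}^{(2)}(x,y)=\frac{k_{N-1}}{k_{N}}\cdot
\frac{\varphi_{N}(x)\varphi_{N-1}(y)-\varphi_{N-1}(x)\varphi_{N}(y)}{x-y},
$$
where $k_{n}$ is the leading coefficient of $P_{n}^{(a,b)}$. The explicit expressions for $k_{n}$ and $h_{n}^{(a,b)}$ given in the excerpt let one show that $k_{N-1}/k_{N}\cdot (h_{N-1}^{(a,b)} h_{N}^{(a,b)})^{-1/2}=\tfrac{1}{2}+O(N^{-1})$. After rescaling $x\mapsto x/N,\ y\mapsto y/N$ and pulling out the factor $1/N$ in front, the kernel becomes
$$
\frac{1}{N}K_{N}^{(2)}\!\left(\frac{x}{N},\frac{y}{N}\right)
=\frac{1}{2}\cdot\frac{\widetilde P_{N}(x/N)\widetilde P_{N-1}(y/N)-\widetilde P_{N-1}(x/N)\widetilde P_{N}(y/N)}{x-y}\bigl(1+O(N^{-1})\bigr),
$$
where $\widetilde P_{n}(t):=P_{n}^{(a,b)}(t)(1-t)^{a/2}(1+t)^{b/2}/\sqrt{h_{n}^{(a,b)}}$ and the weight factors $(1\pm t/N)^{\pm a/2,\,b/2}$ are trivially $1+O(N^{-1})$ uniformly for $x,y$ in compact sets.

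Next I would apply Darboux's classical asymptotic formula for $P_{n}^{(a,b)}(\cos\theta)$, valid uniformly for $\theta$ in compact subsets of $(0,\pi)$:
$$
P_{n}^{(a,b)}(\cos\theta)=\frac{1}{\sqrt{\pi n}}\Bigl(\sin\tfrac{\theta}{2}\Bigr)^{-a-1/2}\Bigl(\cos\tfrac{\theta}{2}\Bigr)^{-b-1/2}
\cos\!\Bigl[\bigl(n+\tfrac{a+b+1}{2}\bigr)\theta-\tfrac{(2a+1)\pi}{4}\Bigr]+O(n^{-3/2}).
$$
For $t=x/N$ fixed $x$, I would set $\theta_{N}=\arccos(x/N)=\tfrac{\pi}{2}-\tfrac{x}{N}+O(N^{-3})$. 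Expanding the amplitude gives $(\sin\tfrac{\theta_{N}}{2})^{-a-1/2}(\cos\tfrac{\theta_{N}}{2})^{-b-1/2}=2^{(a+b+1)/2}+O(N^{-1})$, and combining with the asymptotic $h_{N}^{(a,b)}\sim 2^{a+b+1}/N$ gives the clean amplitude $\sqrt{2/\pi}$. The phase $(N+\tfrac{a+b+1}{2})\theta_{N}$ becomes $\tfrac{N\pi}{2}+\tfrac{(a+b+1)\pi}{4}-x+O(N^{-1})$, so $\widetilde P_{N}(x/N)$ reduces (up to a uniform $O(N^{-1})$) to $\sqrt{2/\pi}\cos(x-\alpha_{N})$ for an $x$-independent phase $\alpha_{N}$, and similarly $\widetilde P_{N-1}(x/N)$ contributes $\sqrt{2/\pi}\cos(x-\alpha_{N}+\pi/2)=\sqrt{2/\pi}\sin(\alpha_{N}-x)$, with the extra $\pi/2$ coming from the shift $N\mapsto N-1$ in the leading phase $N\pi/2$.

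Finally, I would substitute these into the Christoffel--Darboux quotient. The identity $\cos A\sin B-\sin A\cos B=\sin(B-A)$ (with $A=\alpha_{N}-x$, $B=\alpha_{N}-y$, say) collapses the $x$-independent phase $\alpha_{N}$ completely, yielding $\sin(x-y)/[\pi(x-y)]$. Combined with the $O(N^{-1})$ remainders from the amplitude/phase expansions and the weight factors, this proves the theorem with uniformity on compact sets. The main obstacle is the bookkeeping of the phase: one must expand $\theta_{N}$ and $(N+\tfrac{a+b+1}{2})\theta_{N}$ to sufficient order and verify that the $N$-dependent pieces contribute only to the common factor $\alpha_{N}$, which then cancels; the fact that the leading $N\pi/2$ differs by $\pi/2$ between the index $N$ and $N-1$ contributions is precisely what produces the $\sin/\cos$ pair needed for Christoffel--Darboux to yield a sine rather than a sum of cosines.
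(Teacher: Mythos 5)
Your plan is correct and follows essentially the same route as the paper: Christoffel--Darboux, the Szeg\H{o}/Darboux bulk asymptotics for $P_n^{(a,b)}(\cos\theta)$ near $\theta=\pi/2$, and the sine-difference identity to collapse the common $N$-dependent phase, with Stirling-type asymptotics fixing the constant. One bookkeeping slip: for the orthonormal functions the Christoffel--Darboux prefactor is $\frac{k_{N-1}}{k_N}\bigl(h_N^{(a,b)}/h_{N-1}^{(a,b)}\bigr)^{1/2}=\frac{1}{2}+O(N^{-1})$, not $\frac{k_{N-1}}{k_N}\bigl(h_{N-1}^{(a,b)}h_N^{(a,b)}\bigr)^{-1/2}$ (which grows like $N/2^{a+b+1}$); your subsequent amplitude computation, which already absorbs the $1/\sqrt{h_N^{(a,b)}}$ into the $\sqrt{2/\pi}$, is consistent with the correct version.
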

\begin{proof}
By using the Christoffel-Darboux formula, we have
\bea\label{cd}
K_{N}^{(2)}(x,y)&=&\frac{\Gamma(N+1)\Gamma(N+a+b+1)}{2^{a+b}(2N+a+b)\Gamma(N+a)\Gamma(N+b)}\no\\
&\times&\frac{P_N^{(a,b)}(x)P_{N-1}^{(a,b)}(y)-P_{N-1}^{(a,b)}(x)P_N^{(a,b)}(y)}{x-y}
\nonumber\\
&\times&(1-x)^{a/2}(1+x)^{b/2}(1-y)^{a/2}(1+y)^{b/2}.
\eea
Taking advantage of the large $n$ asymptotic formula of the Jacobi polynomials \cite[p. 196]{Szego}
\bea\label{asy}
P_n^{(a,b)}(\cos \theta)&=&\frac{1}{\sqrt{\pi n}}\left(\sin\frac{\theta}{2}\right)^{-a-\frac{1}{2}}\left(\cos\frac{\theta}{2}\right)^{-b-\frac{1}{2}}\no\\
&\times&\cos\left[\left(n+\frac{a+b+1}{2}\right)\theta-\frac{\pi}{2}\left(a+\frac{1}{2}\right)\right]+O(n^{-3/2}),
\eea
where $0<\theta<\pi$, we find that $K_{N}^{(2)}(\cos \theta,\cos \phi)$ equals
\bs
\bea
&&\frac{\Gamma(N+1)\Gamma(N+a+b+1)}{2^{a+b}(2N+a+b)\Gamma(N+a)\Gamma(N+b)(\cos \theta-\cos \phi)}\Bigg\{\frac{2^{a+b+1}}{\pi\sqrt{N(N-1)\sin\theta\sin\phi}}\nonumber\\
&\times&\bigg[\cos\left(\left(N+\frac{a+b+1}{2}\right)\theta-\frac{\pi}{2}\left(a+\frac{1}{2}\right)\right)
\cos\left(\left(N+\frac{a+b-1}{2}\right)\phi-\frac{\pi}{2}\left(a+\frac{1}{2}\right)\right)\nonumber\\
&-&\cos\left(\left(N+\frac{a+b-1}{2}\right)\theta-\frac{\pi}{2}\left(a+\frac{1}{2}\right)\right)
\cos\left(\left(N+\frac{a+b+1}{2}\right)\phi-\frac{\pi}{2}\left(a+\frac{1}{2}\right)\right)\bigg]\nonumber\\
&+&O(N^{-2})\Bigg\},\qquad 0<\theta,\; \phi<\pi. \nonumber
\eea
\es%
The above error terms are uniform for $\theta$ and $\phi$ in compact subsets of $(0,\pi)$.
Note that the expression in the square brackets $[\cdots]$ can be written in the form
\bs
\bea
&&2\bigg[\cos\left(\left(N+\frac{a+b}{2}\right)\theta-\frac{\pi}{2}\left(a+\frac{1}{2}\right)\right)
\sin\left(\left(N+\frac{a+b}{2}\right)\phi-\frac{\pi}{2}\left(a+\frac{1}{2}\right)\right)\cos\frac{\theta}{2}\sin\frac{\phi}{2}\nonumber\\
&-&\sin\left(\left(N+\frac{a+b}{2}\right)\theta-\frac{\pi}{2}\left(a+\frac{1}{2}\right)\right)
\cos\left(\left(N+\frac{a+b}{2}\right)\phi-\frac{\pi}{2}\left(a+\frac{1}{2}\right)\right)\sin\frac{\theta}{2}\cos\frac{\phi}{2}\bigg].\nonumber
\eea
\es
Replacing $\cos \theta$ and $\cos \phi$ by $x/N$ and $y/N$ respectively and taking a large $N$ limit, we establish the theorem with the aid of Stirling's formula. See also \cite{Fox,Nagao}.
\end{proof}
\begin{remark}
When $x=y$, we have as $N\rightarrow\infty$,
$$
\frac{1}{N}K_{N}^{(2)}\left(\frac{x}{N},\frac{x}{N}\right)
=K_{\mathrm{sine}}(x,x)+O(N^{-1}),
$$
where
$$
K_{\mathrm{sine}}(x,x)=\frac{1}{\pi}.
$$
The error term is uniform for $x$ in compact subsets of $\mathbb{R}$.
\end{remark}

Using Theorem \ref{jue1}, we compute (\ref{logdet}) term by term as $N\rightarrow\infty$, and we change $f(x)$ to $f(Nx)$ in the computations. The first term is
\bea
\mathrm{Tr}K_{N}^{(2)}f
&=&\int_{-1}^{1}K_{N}^{(2)}(x,x)f(Nx)dx\nonumber\\
&=&\int_{-N}^{N}\frac{1}{N}K_{N}^{(2)}
\left(\frac{x}{N},\frac{x}{N}\right)f(x)dx\nonumber\\
&=&\int_{-\infty}^{\infty}K_{\mathrm{sine}}(x,x)f(x)dx+O(N^{-1}),\qquad N\rightarrow\infty.\nonumber
\eea
\begin{remark}
We assume that $f(\cdot)$ is a continuous real-valued function belonging to $L^{1}(\mathbb{R})$ and vanishes at $\pm\infty$.
\end{remark}
The second term gives
\bea
\mathrm{Tr}\left(K_{N}^{(2)}f\right)^{2}
&=&\int_{-1}^{1}\int_{-1}^{1}K_{N}^{(2)}(x,y)f(Ny)K_{N}^{(2)}(y,x)
f(Nx)dx dy\nonumber\\
&=&\frac{1}{N^2}\int_{-N}^{N}\int_{-N}^{N}K_{N}^{(2)}
\left(\frac{x}{N},\frac{y}{N}\right)f(y)K_{N}^{(2)}\left(\frac{y}{N},\frac{x}{N}\right)f(x)dx dy\nonumber\\
&=&\int_{-\infty}^{\infty}\int_{-\infty}^{\infty}K_{\mathrm{sine}}^2(x,y)f(x)f(y)dx dy+O(N^{-1}),\qquad N\rightarrow\infty.\nonumber
\eea
Hence, we find heuristically from (\ref{logdet}) that $\log\det\left(I+K_{N}^{(2)}f\right)$ equals
\bea\label{log1}
&&\int_{-\infty}^{\infty}K_{\mathrm{sine}}(x,x)f(x)dx-\frac{1}{2}\int_{-\infty}^{\infty}\int_{-\infty}^{\infty}K_{\mathrm{sine}}^{2}(x,y)f(x)f(y)dx dy\no\\
&&+\cdots+O(N^{-1}).
\eea

Now we are able to derive the mean and variance of the scaled linear statistics $\sum_{j=1}^{N}F(Nx_j)$.
Taking account of the relation of $f(x)$ and $F(x)$ in (\ref{re}), we have
\be\label{fx}
f(x)=-\lambda F(x)+\frac{\lambda^{2}}{2}F^{2}(x)-\cdots.
\ee
Substituting (\ref{fx}) into (\ref{log1}) gives
\bs
\bea
\log\det\left(I+K_{N}^{(2)}f\right)&=&-\lambda\int_{-\infty}^{\infty}K_{\mathrm{sine}}(x,x)F(x)dx
+\frac{\lambda^{2}}{2}\bigg[\int_{-\infty}^{\infty}K_{\mathrm{sine}}(x,x)F^{2}(x)dx\nonumber\\
&-&\int_{-\infty}^{\infty}\int_{-\infty}^{\infty}K_{\mathrm{sine}}^{2}(x,y)F(x)F(y)dx dy\bigg]+\cdots+O(N^{-1}).\nonumber
\eea
\es
From the coefficients of $\lambda$ and $\lambda^{2}$ and in view of $\log G_{N}^{(2)}(f)=\log\det\big(I+K_{N}^{(2)}f\big)$, we get the following results.
\begin{theorem}
Let $\mu_{N}^{(\mathrm{JUE})}$ and $\mathcal{V}_{N}^{(\mathrm{JUE})}$ be the mean and variance of the scaled linear statistics
$\sum_{j=1}^{N}F(Nx_j)$, respectively. We have as $N\rightarrow\infty$,
\be\label{juem}
\mu_{N}^{(\mathrm{JUE})}=\int_{-\infty}^{\infty}K_{\mathrm{sine}}(x,x)F(x)dx+O(N^{-1}),
\ee
\bea\label{juev}
\mathcal{V}_{N}^{(\mathrm{JUE})}&=&\int_{-\infty}^{\infty}K_{\mathrm{sine}}(x,x)F^{2}(x)dx-\int_{-\infty}^{\infty}\int_{-\infty}^{\infty}K_{\mathrm{sine}}^{2}(x,y)F(x)F(y)dx dy\no\\
&+&O(N^{-1}),
\eea
where $K_{\mathrm{sine}}(x,y)$ is the sine kernel defined by (\ref{sine}).
\end{theorem}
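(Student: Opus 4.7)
The plan is to extract the mean and variance as Taylor coefficients of $\log G_{N}^{(2)}(f)$ in $\lambda$, starting from the asymptotic expansion (\ref{log1}) already established via Theorem \ref{jue1} and the trace-log formula (\ref{logdet}). By the definition (\ref{mgf})--(\ref{re}), after the bulk rescaling $x_j \mapsto N x_j$ one has
\[
\log G_{N}^{(2)}(f) \;=\; -\lambda\, \mu_{N}^{(\mathrm{JUE})} \;+\; \tfrac{\lambda^{2}}{2}\, \mathcal{V}_{N}^{(\mathrm{JUE})} \;+\; O(\lambda^{3}),
\]
which is the standard cumulant expansion of the MGF of the linear statistic $\sum_{j=1}^{N} F(Nx_{j})$. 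So it is enough to identify the coefficients of $\lambda$ and $\lambda^{2}$ on the right-hand side of (\ref{log1}) after substituting the series (\ref{fx}).

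First I would insert $f(x) = -\lambda F(x) + \tfrac{\lambda^{2}}{2} F^{2}(x) + O(\lambda^{3})$ into the single-integral term $\int K_{\mathrm{sine}}(x,x)\, f(x)\, dx$: the linear part of $f$ contributes $-\lambda \int K_{\mathrm{sine}}(x,x) F(x)\, dx$, giving the claimed expression (\ref{juem}) for $\mu_{N}^{(\mathrm{JUE})}$, and the quadratic part of $f$ contributes $\tfrac{\lambda^{2}}{2} \int K_{\mathrm{sine}}(x,x) F^{2}(x)\, dx$, which is the first half of (\ref{juev}). Then I would handle the double-integral term $-\tfrac{1}{2} \iint K_{\mathrm{sine}}^{2}(x,y) f(x) f(y)\, dx\, dy$: this contributes at order $\lambda^{2}$ only (the linear-in-$\lambda$ parts of each $f$ multiply), giving $-\tfrac{\lambda^{2}}{2} \iint K_{\mathrm{sine}}^{2}(x,y) F(x) F(y)\, dx\, dy$, which is the second half of (\ref{juev}). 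Matching coefficients finishes the proof.

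The main obstacle is not in the algebra but in justifying the truncation of the series. One has to verify that the omitted terms $\mathrm{Tr}(K_{N}^{(2)} f)^{k}$ for $k \geq 3$ contribute only at order $\lambda^{3}$ or higher and remain under control as $N \to \infty$, and that the $O(N^{-1})$ error terms generated term-by-term by Theorem \ref{jue1} are preserved under the sum in (\ref{logdet}). As the author flags in the introduction, these estimates are treated heuristically here; a rigorous version would require a uniform trace-norm bound on $K_{N}^{(2)} f$ and sufficient decay hypotheses on $F$ to make the sine-kernel integrals absolutely convergent. At the heuristic level assumed in the paper, however, the theorem follows directly by reading off the $\lambda$ and $\lambda^{2}$ coefficients as described above.
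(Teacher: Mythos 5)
Your proposal is correct and follows essentially the same route as the paper: substitute the expansion $f=-\lambda F+\tfrac{\lambda^{2}}{2}F^{2}-\cdots$ into the asymptotic trace-log formula (\ref{log1}) and read off the coefficients of $\lambda$ and $\lambda^{2}$ as $-\mu_{N}^{(\mathrm{JUE})}$ and $\tfrac{1}{2}\mathcal{V}_{N}^{(\mathrm{JUE})}$ via the cumulant expansion of the MGF. Your remarks on the heuristic treatment of the higher traces and error terms match the caveat the authors themselves state in the introduction.
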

\begin{remark}
The result of the above theorem can also be derived by using the method in the paper \cite{Basor1997}, and it is consistent with the one for the Gaussian unitary ensemble in that paper.
\end{remark}

\subsection{Scaling at the Edge of the Spectrum in JUE}
Contrasting to the previous subsection, we rescale the JUE at the (hard) edge of the spectrum in this subsection. It will be seen that the Bessel kernel arises.
\begin{theorem}\label{jue2}
For $x, y\in\mathbb{R}^{+}$, we have as $N\rightarrow\infty$,
$$
\frac{1}{2N^2}K_{N}^{(2)}\left(1-\frac{x}{2N^2},1-\frac{y}{2N^2}\right)
=K_{\mathrm{Bessel}}^{(a)}(x,y)+O(N^{-1}),
$$
where $K_{\mathrm{Bessel}}^{(a)}(x,y)$ is the Bessel kernel of order $a$ defined by
\be\label{Bessel}
K_{\mathrm{Bessel}}^{(a)}(x,y):=\frac{J_a(\sqrt{x})\sqrt{y}J_a'(\sqrt{y})-J_a'(\sqrt{x})\sqrt{x}J_a(\sqrt{y})}{2(x-y)},
\ee
and $J_a(\cdot)$ is the Bessel function of the first kind of order $a$ \cite[p. 102]{Lebedev}. The error term is uniform for $x$ and $y$ in compact subsets of $\mathbb{R}^{+}$.
\end{theorem}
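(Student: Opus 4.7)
The plan is to mimic the proof of Theorem \ref{jue1} but replace the Darboux-style interior asymptotic of the Jacobi polynomials by the Mehler--Heine asymptotic appropriate at the hard edge $x=1$. The starting point is again the Christoffel--Darboux identity (\ref{cd}), evaluated at $x_N=1-x/(2N^2)$ and $y_N=1-y/(2N^2)$ with $x,y>0$ in a compact set of $\mathbb{R}^+$.

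First I would handle the non-polynomial prefactors. By Stirling's formula, the gamma-function prefactor
$$\frac{\Gamma(N+1)\Gamma(N+a+b+1)}{2^{a+b}(2N+a+b)\Gamma(N+a)\Gamma(N+b)}=\frac{N}{2^{a+b+1}}\bigl(1+O(N^{-1})\bigr),$$
while the weight factors satisfy $(1-x_N)^{a/2}(1+x_N)^{b/2}(1-y_N)^{a/2}(1+y_N)^{b/2}=2^{b-a}(xy)^{a/2}N^{-2a}\bigl(1+O(N^{-2})\bigr)$. The Cauchy-like factor gives $1/(x_N-y_N)=-2N^2/(x-y)$ exactly.

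The main step is to expand the numerator $P_N^{(a,b)}(x_N)P_{N-1}^{(a,b)}(y_N)-P_{N-1}^{(a,b)}(x_N)P_N^{(a,b)}(y_N)$ using the Mehler--Heine formula \cite[p.~192]{Szego}
$$\lim_{n\to\infty}n^{-a}P_n^{(a,b)}\!\left(\cos\frac{z}{n}\right)=\Bigl(\frac{z}{2}\Bigr)^{-a}J_a(z),$$
uniformly for $z$ in compact subsets of $(0,\infty)$, together with its next-to-leading correction. Writing $1-x/(2N^2)=\cos(\sqrt{x}/N)+O(N^{-4})$ gives $P_N^{(a,b)}(x_N)=N^a\,2^a x^{-a/2}J_a(\sqrt{x})\bigl(1+O(N^{-1})\bigr)$, and for $P_{N-1}^{(a,b)}$ I would rewrite the argument as $\cos(z_{N-1}/(N-1))$ with $z_{N-1}=\sqrt{x}(1-1/N)$, so that $P_{N-1}^{(a,b)}(x_N)=N^a\,2^a x^{-a/2}\bigl[J_a(\sqrt{x})-\tfrac{\sqrt{x}}{N}J_a'(\sqrt{x})\bigr]+O(N^{a-2})$, the factors $(N-1)^a$ and $(z_{N-1}/2)^{-a}$ conspiring to cancel at order $N^{-1}$. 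Substituting these expansions, the leading $J_a(\sqrt{x})J_a(\sqrt{y})$ terms cancel, and the first surviving contribution is
$$N^{2a-1}\cdot 4^a(xy)^{-a/2}\bigl[\sqrt{x}\,J_a'(\sqrt{x})J_a(\sqrt{y})-\sqrt{y}\,J_a(\sqrt{x})J_a'(\sqrt{y})\bigr].$$

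Finally I would multiply together the prefactor $N/2^{a+b+1}$, the weight factor $2^{b-a}(xy)^{a/2}N^{-2a}$, the factor $-2N^2/(x-y)$, and the polynomial asymptotic above; all constants combine to $1$, the $N$-powers combine to $2N^2$, and what remains after dividing by $2N^2$ is exactly $K_{\mathrm{Bessel}}^{(a)}(x,y)$ as defined in (\ref{Bessel}). The error term is $O(N^{-1})$, inheriting uniformity from the Mehler--Heine asymptotic on compact subsets of $(0,\infty)$.

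The main obstacle is the familiar one: the $O(N^{2a})$ leading contributions to the two products of Jacobi polynomials cancel identically, so the Bessel kernel structure only emerges from the $O(N^{-1})$ correction in Mehler--Heine, and one must justify that the correction term is indeed $(\sqrt{x}/N)J_a'(\sqrt{x})$ with a uniform remainder, rather than simply invoking the leading-order Mehler--Heine limit. Equivalently, one could bypass this by using the three-term recurrence to express $P_{N-1}^{(a,b)}$ via $P_N^{(a,b)}$ and $\tfrac{d}{dx}P_N^{(a,b)}$, but the direct Taylor expansion route keeps the bookkeeping transparent and parallels the structure of the proof of Theorem \ref{jue1}.
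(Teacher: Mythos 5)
Your overall strategy---Christoffel--Darboux (\ref{cd}), hard-edge asymptotics of the Jacobi polynomials, cancellation of the leading $O(N^{2a})$ products so that the kernel emerges only at the next order, and Stirling's formula for the prefactor---is exactly the strategy of the paper, and your bookkeeping of the constants and powers of $N$ checks out. The one genuine weak point is the one you flag yourself: the Mehler--Heine formula, as you cite it, is only a limit statement, so it supplies neither a rate for the leading term nor the identification of the next-order correction as $-\tfrac{\sqrt{x}}{N}J_a'(\sqrt{x})$ with a uniform $O(N^{-2})$ remainder. Since the entire answer lives at order $N^{-1}$ relative to the cancelling leading terms, this is not a removable technicality: without a quantitative second-order input the argument does not close. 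Your rescaling trick $z_{N-1}=\sqrt{x}\,(1-1/N)$ correctly predicts what the correction must be, but it still rests on an unproved uniform expansion of $P_{N-1}^{(a,b)}$ near the endpoint.

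The paper resolves precisely this point by using, in place of Mehler--Heine, the Hilb-type formula (\ref{asy2}) (Szeg\"{o}, p.~197), which is a uniform quantitative statement with error $\theta^{1/2}O(n^{-3/2})$ and, crucially, carries the degree dependence inside the Bessel argument $\bigl(n+\tfrac{a+b+1}{2}\bigr)\theta$. With $\theta\sim\sqrt{x}/N$ that error is $O(N^{-2})$, safely below the $O(N^{-1})$ level at which the kernel appears, and the numerator becomes a difference of products $J_a\bigl(\tfrac{N+(a+b\mp1)/2}{N}\sqrt{x}\bigr)J_a\bigl(\tfrac{N+(a+b\pm1)/2}{N}\sqrt{y}\bigr)$ whose arguments differ by exactly $\sqrt{x}/N$ and $\sqrt{y}/N$; the derivative $J_a'$ then comes out of an honest difference quotient rather than a formal Taylor expansion of an unquantified limit. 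Your proof becomes complete---and essentially identical to the paper's---once you substitute (\ref{asy2}) for the bare Mehler--Heine limit.
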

\begin{proof}
Taking account of (\ref{cd}) and using the Hilb-type asymptotic formula of the Jacobi polynomials \cite[p. 197]{Szego}
\bea\label{asy2}
\left(\sin\frac{\theta}{2}\right)^{a}\left(\cos\frac{\theta}{2}\right)^{b}P_n^{(a,b)}(\cos \theta)&=&\left(n+\frac{a+b+1}{2}\right)^{-a}\frac{\Gamma(n+a+1)}{n!}\left(\frac{\theta}{\sin\theta}\right)^{1/2}\nonumber\\
&\times&J_a\left(\left(n+\frac{a+b+1}{2}\right)\theta\right)+\theta^{1/2}O(n^{-3/2}),
\eea
where $0<\theta<\pi$, we find that $\frac{1}{2N^2}K_{N}^{(2)}\left(1-\frac{x}{2N^2},1-\frac{y}{2N^2}\right)$ equals
\bs
\bea
&&\frac{\Gamma(N+1)\Gamma(N+a+b+1)}{(2N+a+b)\Gamma(N+a)\Gamma(N+b)(x-y)}\Bigg\{\Bigg[J_{a}\left(\frac{N+\frac{a+b-1}{2}}{N}\sqrt{x}\right)J_{a}
\left(\frac{N+\frac{a+b+1}{2}}{N}\sqrt{y}\right)\no\\
&-&J_{a}\left(\frac{N+\frac{a+b+1}{2}}{N}\sqrt{x}\right)J_{a}\left(\frac{N+\frac{a+b-1}{2}}{N}\sqrt{y}\right)\Bigg]+O(N^{-2})\Bigg\},\nonumber
\eea
\es%
uniformly for $x$ and $y$ in compact subsets of $\mathbb{R}^{+}$.
By writing the formula in the square brackets $[\cdots]$ as
\bea
&&J_{a}\left(\frac{N+\frac{a+b-1}{2}}{N}\sqrt{x}\right)\left(J_{a}\left(\frac{N+\frac{a+b+1}{2}}{N}\sqrt{y}\right)-J_{a}\left(\frac{N+\frac{a+b-1}{2}}{N}\sqrt{y}\right)\right)
\nonumber\\
&-&J_{a}\left(\frac{N+\frac{a+b-1}{2}}{N}\sqrt{y}\right)\left(J_{a}\left(\frac{N+\frac{a+b+1}{2}}{N}\sqrt{x}\right)-J_{a}\left(\frac{N+\frac{a+b-1}{2}}{N}\sqrt{x}\right)\right),
\nonumber
\eea
we finally obtain the desired result by taking a large $N$ limit together with the aid of Stirling's formula.
\end{proof}
\begin{remark}
When $x=y$, we have as $N\rightarrow\infty$,
$$
\frac{1}{2N^2}K_{N}^{(2)}\left(1-\frac{x}{2N^2},1-\frac{x}{2N^2}\right)
=K_{\mathrm{Bessel}}^{(a)}(x,x)+O(N^{-1}),
$$
where
$$
K_{\mathrm{Bessel}}^{(a)}(x,x)=\frac{(J_a(\sqrt{x}))^2-J_{a+1}(\sqrt{x})J_{a-1}(\sqrt{x})}{4},
$$
which is obtained by letting $y\rightarrow x$ in (\ref{Bessel}). The error term is uniform for $x$ in compact subsets of $\mathbb{R}^{+}$. The Bessel kernel also arises in the Laguerre unitary ensemble when scaling at the hard edge of the spectrum \cite{Forrester}; see also \cite{Tracy1994B}.
\end{remark}

Similarly as in the previous subsection, we use Theorem \ref{jue2} to compute (\ref{logdet}) term by term as $N\rightarrow\infty$, and replace $f(x)$ by $f(2N^2(1-x))$ in the computations. The first term reads
\bea
\mathrm{Tr}K_{N}^{(2)}f
&=&\int_{-1}^{1}K_{N}^{(2)}(x,x)f(2N^2(1-x))dx\nonumber\\
&=&\int_{0}^{4N^2}\frac{1}{2N^2}K_{N}^{(2)}
\left(1-\frac{x}{2N^2},1-\frac{x}{2N^2}\right)f(x)dx\nonumber\\
&=&\int_{0}^{\infty}K_{\mathrm{Bessel}}^{(a)}(x,x)f(x)dx+O(N^{-1}),\qquad N\rightarrow\infty.\nonumber
\eea
\begin{remark}
We assume that $f(\cdot)$ is a continuous real-valued function belonging to $L^{1}(\mathbb{R}^{+})$ and vanishes at $+\infty$.
\end{remark}
The second term gives
\bea
\mathrm{Tr}\left(K_{N}^{(2)}f\right)^{2}
&=&\int_{-1}^{1}\int_{-1}^{1}K_{N}^{(2)}(x,y)f(2N^2(1-y))K_{N}^{(2)}(y,x)
f(2N^2(1-x))dx dy\nonumber\\
&=&\int_{0}^{\infty}\int_{0}^{\infty}\left(K_{\mathrm{Bessel}}^{(a)}(x,y)\right)^2f(x)f(y)dx dy+O(N^{-1}),\quad N\rightarrow\infty.\nonumber
\eea
It follows, again heuristically, from (\ref{logdet}) that $\log\det\left(I+K_{N}^{(2)}f\right)$ equals
\bea
&&\int_{0}^{\infty}K_{\mathrm{Bessel}}^{(a)}(x,x)f(x)dx-\frac{1}{2}\int_{0}^{\infty}\int_{0}^{\infty}\left(K_{\mathrm{Bessel}}^{(a)}(x,y)\right)^2f(x)f(y)dx dy\no\\
&&+\cdots+O(N^{-1}).\no
\eea
Substituting (\ref{fx}) into the above gives
\bea
&&\log\det\left(I+K_{N}^{(2)}f\right)=-\lambda\int_{0}^{\infty}K_{\mathrm{Bessel}}^{(a)}(x,x)F(x)dx\no\\
&&+\frac{\lambda^{2}}{2}
\bigg[\int_{0}^{\infty}K_{\mathrm{Bessel}}^{(a)}(x,x)F^{2}(x)dx
-\int_{0}^{\infty}\int_{0}^{\infty}\left(K_{\mathrm{Bessel}}^{(a)}(x,y)\right)^2F(x)F(y)dx dy\bigg]\no\\
&&+\cdots+O(N^{-1}).\nonumber
\eea
Then, the following theorem follows.
\begin{theorem}
Let $\tilde{\mu}_{N}^{(\mathrm{JUE})}$ and $\tilde{\mathcal{V}}_{N}^{(\mathrm{JUE})}$ be the mean and variance of the scaled linear statistics
$\sum_{j=1}^{N}F(2N^2(1-x_j))$, respectively. Then as $N\rightarrow\infty$,
\be\label{juem2}
\tilde{\mu}_{N}^{(\mathrm{JUE})}=\int_{0}^{\infty}K_{\mathrm{Bessel}}^{(a)}(x,x)F(x)dx+O(N^{-1}),
\ee
\bea\label{juev2}
\tilde{\mathcal{V}}_{N}^{(\mathrm{JUE})}&=&\int_{0}^{\infty}K_{\mathrm{Bessel}}^{(a)}(x,x)F^{2}(x)dx-\int_{0}^{\infty}\int_{0}^{\infty}
\left(K_{\mathrm{Bessel}}^{(a)}(x,y)\right)^2F(x)F(y)dx dy\no\\
&+&O(N^{-1}),
\eea
where $K_{\mathrm{Bessel}}^{(a)}(x,y)$ is the Bessel kernel defined by (\ref{Bessel}).
\end{theorem}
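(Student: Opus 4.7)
The plan is to mirror the bulk-scaling argument from the preceding subsection, but at the hard edge with the rescaling $x_j\mapsto 2N^{2}(1-x_j)$ and with the sine kernel replaced by the Bessel kernel furnished by Theorem \ref{jue2}. I would start from $G_N^{(2)}(f)=\det(I+K_N^{(2)}f)$ with $f(x)=\mathrm{e}^{-\lambda F(2N^{2}(1-x))}-1$, take logarithms via the trace-log expansion (\ref{logdet}), and evaluate each trace asymptotically under the edge scaling.

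Concretely, for the linear trace I would apply the change of variable $x\mapsto 1-x/(2N^{2})$ to write
\[
\mathrm{Tr}\,K_N^{(2)}f=\int_0^{4N^{2}}\frac{1}{2N^{2}}K_N^{(2)}\!\left(1-\tfrac{x}{2N^{2}},1-\tfrac{x}{2N^{2}}\right)f(x)\,dx,
\]
and then insert the pointwise limit of Theorem \ref{jue2} to obtain $\int_0^{\infty}K_{\mathrm{Bessel}}^{(a)}(x,x)f(x)\,dx+O(N^{-1})$. An analogous two-variable change of variable turns $\mathrm{Tr}(K_N^{(2)}f)^{2}$ into $\int_0^\infty\!\int_0^\infty(K_{\mathrm{Bessel}}^{(a)}(x,y))^{2}f(x)f(y)\,dx\,dy+O(N^{-1})$. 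Next I would insert the series $f(x)=-\lambda F(x)+\tfrac{\lambda^{2}}{2}F^{2}(x)+O(\lambda^{3})$ from (\ref{fx}) into the truncated trace-log expansion and collect powers of $\lambda$: the $\lambda$-coefficient is supplied solely by the linear trace, while the $\lambda^{2}$-coefficient receives $\tfrac{1}{2}\int_0^\infty K_{\mathrm{Bessel}}^{(a)}(x,x)F^{2}(x)\,dx$ from the linear trace and $-\tfrac{1}{2}\int_0^\infty\!\int_0^\infty(K_{\mathrm{Bessel}}^{(a)}(x,y))^{2}F(x)F(y)\,dx\,dy$ from the quadratic trace. Since $\mathrm{Tr}(K_N^{(2)}f)^{k}$ for $k\geq 3$ is $O(\lambda^{3})$, no other term contributes. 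Identifying $\tilde{\mu}_N^{(\mathrm{JUE})}=-\partial_\lambda\log G_N^{(2)}(f)\big|_{\lambda=0}$ and $\tilde{\mathcal V}_N^{(\mathrm{JUE})}=\partial_\lambda^{2}\log G_N^{(2)}(f)\big|_{\lambda=0}$ then yields (\ref{juem2}) and (\ref{juev2}).

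The main obstacle is turning the pointwise asymptotics of Theorem \ref{jue2} into an integrated statement, because the convergence $\tfrac{1}{2N^{2}}K_N^{(2)}(1-\tfrac{x}{2N^{2}},1-\tfrac{y}{2N^{2}})\to K_{\mathrm{Bessel}}^{(a)}(x,y)$ is only uniform on compact subsets of $\mathbb{R}^{+}$. A rigorous proof must therefore control two regimes that the above heuristic sweeps away: the region where $x$ (or $y$) is of order $N^{2}$, which must be handled using the integrability/decay of $F$ at infinity (the standing assumption in the remark following the edge computation), and a neighbourhood of the hard edge $x=0$, where uniform bounds on the prelimit kernel must be combined with integrability of $K_{\mathrm{Bessel}}^{(a)}(x,x)F(x)$ near zero. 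A careful dominated-convergence or Vitali-type argument should then replace the formal $O(N^{-1})$ by a genuine error bound. As the authors themselves flag at the end of the introduction, it is precisely this step — rather than the algebra of extracting the $\lambda$ and $\lambda^{2}$ coefficients — that constitutes the substantive part of the argument.
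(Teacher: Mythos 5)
Your proposal follows the paper's own argument essentially verbatim: the same change of variables $x\mapsto 1-x/(2N^{2})$ in the trace--log expansion of $\det(I+K_{N}^{(2)}f)$, the same substitution of Theorem \ref{jue2} into the first two traces, and the same extraction of the $\lambda$ and $\lambda^{2}$ coefficients after inserting (\ref{fx}). Your closing remark about upgrading the compact-uniform kernel asymptotics to an integrated statement correctly identifies the step the authors themselves acknowledge as heuristic, so there is nothing further to add.
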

\begin{remark}
The result of the above theorem is consistent with the one for the Laguerre unitary ensemble by scaling at the hard edge of the spectrum \cite{Basor1997}.
\end{remark}

\section{Jacobi Symplectic Ensemble (JSE)}
\subsection{Finite $N$ Case for the MGF in JSE}
In this case, $w(x)=(1-x)^a(1+x)^b,\;\;x\in [-1,1],\;a, b>0$.
The authors \cite{Min201601} expressed $G_{N}^{(4)}(f)$ as a Fredholm determinant based on the work of Dieng and Tracy \cite{Dieng} and Tracy and Widom \cite{Tracy1998}.
Define
$$
\psi_{j}^{(4)}(x):=\pi_{j}(x)\sqrt{w(x)},\;\;j=0,1,2,\ldots,
$$
where $\pi_{j}(x)$ is an arbitrary polynomial of degree $j$, and
$$
M^{(4)}:=\left[\int_{-1}^{1}\left(\psi_{j}^{(4)}(x)\frac{d}{dx}\psi_{k}^{(4)}(x)-\psi_{k}^{(4)}(x)\frac{d}{dx}\psi_{j}^{(4)}(x)\right)dx\right]_{j,k=0}^{2N-1}
$$
with its inverse denoted by
$$
\left(M^{(4)}\right)^{-1}=:(\mu_{jk})_{j,k=0}^{2N-1}.
$$
It was shown in \cite{Min201601} that
\be\label{gn4}
\left[G_{N}^{(4)}(f)\right]^{2}=\det\left(I+2K_{N}^{(4)}f-K_{N}^{(4)}\varepsilon f'\right),
\ee
where $K_{N}^{(4)}$ and $\varepsilon$ are integral operators with kernel
\be\label{kn4}
K_{N}^{(4)}(x,y):=-\sum_{j,k=0}^{2N-1}\mu_{jk}\psi_{j}^{(4)}(x)\frac{d}{dy}\psi_{k}^{(4)}(y)
\ee
and
$$
\varepsilon(x,y):=\frac{1}{2}\mathrm{sgn}(x-y),
$$
respectively. We require that $f\in C^{1}[-1,1]$ and vanishes at the endpoints $\pm 1$.
\begin{remark}
If $g(x)$ is an integrable function on $[-1,1]$, then
$$
\varepsilon g(x)=\int_{-1}^{1}\varepsilon(x,y) g(y)dy=\frac{1}{2}\left(\int_{-1}^{x}g(y)dy-\int_{x}^{1}g(y)dy\right),\qquad x\in [-1,1].
$$
In addition, it is easy to see that $\varepsilon(y,x)=-\varepsilon(x,y)$, i.e., $\varepsilon^{t}=-\varepsilon$, where $t$ denotes the transpose.
\end{remark}

The fundamental theorem of calculus implies the following result \cite{Dieng}; see also \cite{Min201601}.
\begin{lemma}\label{lem}
Let $D$ be the operator that acts by differentiation. Then for any function $g\in C^{1}[-1,1]$ and $g(-1)=g(1)=0$, we have $D\varepsilon g(x)=\varepsilon D g(x)=g(x)$, i.e.,
$D\varepsilon=\varepsilon D=I$.
\end{lemma}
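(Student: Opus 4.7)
The plan is to verify both halves of the identity $D\varepsilon = \varepsilon D = I$ directly from the explicit formula for $\varepsilon g$ displayed in the remark preceding the lemma, namely
$$
\varepsilon g(x) = \frac{1}{2}\left(\int_{-1}^{x} g(y)\,dy - \int_{x}^{1} g(y)\,dy\right),
$$
relying solely on the fundamental theorem of calculus and the prescribed vanishing $g(-1)=g(1)=0$.

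First I would establish $D\varepsilon g = g$. Differentiating the right-hand side above in $x$, the first integral contributes $g(x)$ by the fundamental theorem of calculus, while the second integral contributes $-g(x)$ since $x$ now sits in the lower limit. The two terms combine with the overall factor $1/2$ to yield precisely $g(x)$. It is worth observing in passing that this half of the lemma does not require the boundary hypothesis on $g$.

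Next I would establish $\varepsilon D g = g$ by applying the explicit formula for $\varepsilon$ to the function $g'$, giving
$$
\varepsilon g'(x) = \frac{1}{2}\left(\int_{-1}^{x} g'(y)\,dy - \int_{x}^{1} g'(y)\,dy\right) = \frac{1}{2}\bigl((g(x)-g(-1)) - (g(1)-g(x))\bigr),
$$
where the two integrals are evaluated by the fundamental theorem of calculus. Invoking the boundary condition $g(\pm 1)=0$ collapses this to $\frac{1}{2}(2g(x)) = g(x)$.

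There is no genuine obstacle; the entire proof is a one-line application of the fundamental theorem of calculus in each direction. The only conceptual point worth flagging is the asymmetric role of the boundary condition: the vanishing of $g$ at $\pm 1$ is essential for $\varepsilon D = I$ but superfluous for $D\varepsilon = I$. This reflects the fact that the symmetric choice of antiderivative built into the definition of $\varepsilon$ inevitably picks up boundary contributions when the derivative is applied before $\varepsilon$, and these must be killed by the hypothesis on $g$.
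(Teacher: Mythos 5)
Your proof is correct and follows exactly the route the paper intends: the paper simply states that the fundamental theorem of calculus implies the lemma (citing Dieng--Tracy), and your direct verification of both identities from the explicit formula for $\varepsilon g$, including the observation that the boundary condition $g(\pm1)=0$ is needed only for $\varepsilon D=I$, is precisely that argument written out.
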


Similarly to the discussions in \cite{Dieng,Min201601,Tracy1998},
we choose a special $\psi_{j}^{(4)}$ to simplify  $M^{(4)}$ as much as possible. To proceed, let
$$
\psi_{2j+1}^{(4)}(x):=\frac{1}{\sqrt{2}}(1-x^2)\varphi_{2j+1}^{(4)}(x),\quad \psi_{2j}^{(4)}(x):=-\frac{1}{\sqrt{2}}\varepsilon\varphi_{2j+1}^{(4)}(x),\;\;j=0,1,2,\ldots,
$$
where $\varphi_{j}^{(4)}(x)$ is given by
$$
\varphi_{j}^{(4)}(x)=\frac{P_{j}^{(a-1,b-1)}(x)}{\sqrt{h_{j}^{(a-1,b-1)}}}(1-x)^{\frac{a}{2}-1}(1+x)^{\frac{b}{2}-1},
$$
and $P_{j}^{(a-1,b-1)}(x),\; j=0,1,\ldots$ are the usual Jacobi polynomials with the orthogonality condition
$$
\int_{-1}^{1}P_{j}^{(a-1,b-1)}(x)P_{k}^{(a-1,b-1)}(x)(1-x)^{a-1}(1+x)^{b-1} dx=h_{j}^{(a-1,b-1)}\delta_{jk}.
$$
It can be shown that $\psi_{j}^{(4)}(x)$ is equal to $(1-x)^{a/2}(1+x)^{b/2}$ multiplied by a polynomial of degree $j$. Similarly as the Laguerre symplectic ensemble case studied in \cite[Theorem 3.10]{Min201601}, $M^{(4)}$ is computed to be the direct sum of $N$ copies of
\bs$
\begin{pmatrix}
0&1\\
-1&0
\end{pmatrix}
$\es
by using the orthogonality, namely
$$
M^{(4)}=\begin{pmatrix}
0&1&0&0&\cdots&0&0\\
-1&0&0&0&\cdots&0&0\\
0&0&0&1&\cdots&0&0\\
0&0&-1&0&\cdots&0&0\\
\vdots&\vdots&\vdots&\vdots&&\vdots&\vdots\\
0&0&0&0&\cdots&0&1\\
0&0&0&0&\cdots&-1&0
\end{pmatrix}_{2N\times 2N}.
$$
It follows that $\left(M^{(4)}\right)^{-1}=-M^{(4)}$, so $\mu_{2j,2j+1}=-1,\; \mu_{2j+1,2j}=1,\; j=0,1,\ldots,N-1$, and $\mu_{jk}=0$ for other cases.

\begin{lemma}
We have
\be\label{kn4a}
K_N^{(4)}(x,y)=\frac{1}{2}S_N^{(4)}(x,y)+\frac{1}{2}C_{2N}^{(4)}\varepsilon\varphi_{2N+1}^{(4)}(x)
\varphi_{2N}^{(4)}(y),
\ee
where
$$
C_{2N}^{(4)}=\sqrt{\frac{(2N+1)(2N+a)(2N+b)(2N+a+b-1)}{(4N+a+b+1)(4N+a+b-1)}}
$$
and
$$
S_N^{(4)}(x,y)=\sum_{j=0}^{2N}(1-x^2)\varphi_{j}^{(4)}(x)\varphi_{j}^{(4)}(y).
$$
\end{lemma}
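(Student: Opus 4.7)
The plan is to start from the kernel formula (\ref{kn4}) and exploit the very sparse structure of $(\mu_{jk})$. Since the only nonzero entries are $\mu_{2j,2j+1}=-1$ and $\mu_{2j+1,2j}=1$ for $j=0,1,\ldots,N-1$, the double sum collapses to
\bea
K_N^{(4)}(x,y)=\sum_{j=0}^{N-1}\left[\psi_{2j}^{(4)}(x)\frac{d}{dy}\psi_{2j+1}^{(4)}(y)-\psi_{2j+1}^{(4)}(x)\frac{d}{dy}\psi_{2j}^{(4)}(y)\right].\no
\eea
Substituting the explicit choices $\psi_{2j+1}^{(4)}(x)=\frac{1}{\sqrt{2}}(1-x^2)\varphi_{2j+1}^{(4)}(x)$ and $\psi_{2j}^{(4)}(x)=-\frac{1}{\sqrt{2}}\varepsilon\varphi_{2j+1}^{(4)}(x)$, and using Lemma \ref{lem} (i.e., $D\varepsilon=I$) to compute $\frac{d}{dy}\psi_{2j}^{(4)}(y)=-\frac{1}{\sqrt{2}}\varphi_{2j+1}^{(4)}(y)$, this reduces to
\bea
K_N^{(4)}(x,y)=\frac{1}{2}(1-x^2)\sum_{j=0}^{N-1}\varphi_{2j+1}^{(4)}(x)\varphi_{2j+1}^{(4)}(y)-\frac{1}{2}\sum_{j=0}^{N-1}\varepsilon\varphi_{2j+1}^{(4)}(x)\frac{d}{dy}\bigl[(1-y^2)\varphi_{2j+1}^{(4)}(y)\bigr].\no
\eea

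The next step is to deal with the derivative $\frac{d}{dy}[(1-y^2)\varphi_{2j+1}^{(4)}(y)]$. Using the classical differentiation formula for Jacobi polynomials (or equivalently the Rodrigues-type identity), $(1-y^2)\varphi_{2j+1}^{(4)}(y)$ -- which up to a constant is $(1-y)^{a/2}(1+y)^{b/2}P_{2j+1}^{(a-1,b-1)}(y)$ -- has a derivative expressible as a linear combination of $\varphi_{2j}^{(4)}(y)$, $\varphi_{2j+1}^{(4)}(y)$ and $\varphi_{2j+2}^{(4)}(y)$ with coefficients depending on $a,b,j$. Plugging this expansion in and using $\varepsilon D = I$ (from Lemma \ref{lem}) on terms of the form $\varepsilon \varphi_{2j+1}^{(4)}(x)\cdot\varphi_{2j}^{(4)}(y)$ and $\varepsilon\varphi_{2j+1}^{(4)}(x)\cdot\varphi_{2j+2}^{(4)}(y)$, the double sum will telescope. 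The even-index and odd-index contributions combine to produce the full sum $\sum_{j=0}^{2N}(1-x^2)\varphi_j^{(4)}(x)\varphi_j^{(4)}(y)$, i.e.\ exactly $S_N^{(4)}(x,y)$, while the failure of telescoping at the top of the sum leaves a single surviving boundary term of the form $\varepsilon\varphi_{2N+1}^{(4)}(x)\varphi_{2N}^{(4)}(y)$.

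The last step is to identify the multiplicative constant. Its value is determined by the coefficient appearing in the Jacobi derivative identity at index $n=2N$, which after rationalization using the explicit normalization $h_n^{(a-1,b-1)}=\frac{2^{a+b-1}\Gamma(n+a)\Gamma(n+b)}{n!(2n+a+b-1)\Gamma(n+a+b-1)}$ yields precisely
\be
C_{2N}^{(4)}=\sqrt{\frac{(2N+1)(2N+a)(2N+b)(2N+a+b-1)}{(4N+a+b+1)(4N+a+b-1)}}.\no
\ee
This completes the derivation of (\ref{kn4a}).

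The main obstacle will be the bookkeeping in the telescoping argument: one must carefully track which cross-terms at consecutive indices cancel against each other and which survive as boundary contributions, and verify that the surviving upper-end terms combine into the single closed form $\frac{1}{2}C_{2N}^{(4)}\varepsilon\varphi_{2N+1}^{(4)}(x)\varphi_{2N}^{(4)}(y)$ with no other residual pieces. This is essentially the same manipulation already performed for the Laguerre symplectic ensemble in \cite{Min201601}, so the scheme is available; the work is in adapting the Jacobi derivative coefficients correctly.
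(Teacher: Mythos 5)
Your proposal follows essentially the same route as the paper's proof: collapse the double sum using the sparsity of $(\mu_{jk})$, substitute the explicit $\psi_j^{(4)}$, expand $\left[(1-y^2)\varphi_{2j+1}^{(4)}(y)\right]'$ via the Jacobi recurrences, reindex so that the brackets $C_{2j-1}^{(4)}\varepsilon\varphi_{2j-1}^{(4)}(x)-C_{2j}^{(4)}\varepsilon\varphi_{2j+1}^{(4)}(x)$ are identified with $(1-x^2)\varphi_{2j}^{(4)}(x)$ through Lemma \ref{lem}, and keep the single uncancelled top term. The one point worth tightening is that the derivative expansion contains no $\varphi_{2j+1}^{(4)}$ component at all (the combination of (\ref{rf1}) and (\ref{rf2}) makes the middle coefficient vanish), which is precisely what allows the even-index terms to fill in $S_N^{(4)}$ cleanly.
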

\begin{proof}
From (\ref{kn4}) we find that $K_{N}^{(4)}(x,y)$ equals
\bea
&&\sum_{j=0}^{N-1}\psi_{2j}(x)\psi_{2j+1}'(y)-\sum_{j=0}^{N-1}\psi_{2j+1}(x)\psi_{2j}'(y)\nonumber\\
&&=\frac{1}{2}\sum_{j=0}^{N-1}(1-x^2)\varphi_{2j+1}^{(4)}(x)\varphi_{2j+1}^{(4)}(y)-\frac{1}{2}\sum_{j=0}^{N-1}\varepsilon\varphi_{2j+1}^{(4)}(x)
\left[(1-y^2)\varphi_{2j+1}^{(4)}(y)\right]'.\nonumber
\eea
By using the recurrence formulas for the Jacobi polynomials \cite[Sec. 4.5]{Szego}
\bea\label{rf1}
&&(2n+a+b)(1-x^2)\frac{d}{dx}P_n^{(a,b)}(x)\no\\
&=&n\left[a-b-(2n+a+b)x\right]P_n^{(a,b)}(x)+2(n+a)(n+b)P_{n-1}^{(a,b)}(x)
\eea
and
\bea\label{rf2}
&&(2n+a+b+1)[(2n+a+b)(2n+a+b+2)x+a^2-b^2]P_n^{(a,b)}(x)\nonumber\\
&=&2(n+1)(n+a+b+1)(2n+a+b)P_{n+1}^{(a,b)}(x)\no\\
&+&2(n+a)(n+b)(2n+a+b+2)P_{n-1}^{(a,b)}(x),
\eea
we obtain
$$
\left[(1-y^2)\varphi_{2j+1}^{(4)}(y)\right]'=C_{2j}^{(4)}\varphi_{2j}^{(4)}(y)-C_{2j+1}^{(4)}\varphi_{2j+2}^{(4)}(y),
$$
where
$$
C_j^{(4)}:=\sqrt{\frac{(j+1)(j+a)(j+b)(j+a+b-1)}{(2j+a+b+1)(2j+a+b-1)}}.
$$
It follows that
\bea\label{com1}
K_{N}^{(4)}(x,y)&=&\frac{1}{2}\sum_{j=0}^{N-1}(1-x^2)\varphi_{2j+1}^{(4)}(x)\varphi_{2j+1}^{(4)}(y)\no\\
&+&\frac{1}{2}\sum_{j=0}^{N}\left[C_{2j-1}^{(4)}\varepsilon\varphi_{2j-1}^{(4)}(x)-C_{2j}^{(4)}\varepsilon\varphi_{2j+1}^{(4)}(x)\right]\varphi_{2j}^{(4)}(y)\no\\
&+&\frac{1}{2}C_{2N}^{(4)}\varepsilon\varphi_{2N+1}^{(4)}(x)\varphi_{2N}^{(4)}(y).
\eea
Using (\ref{rf1}) and (\ref{rf2}) again, we find
$$
\left[(1-x^2)\varphi_{2j}^{(4)}(x)\right]'=C_{2j-1}^{(4)}\varphi_{2j-1}^{(4)}(x)-C_{2j}^{(4)}\varphi_{2j+1}^{(4)}(x).
$$
Then from Lemma \ref{lem} we have
\be\label{com2}
(1-x^2)\varphi_{2j}^{(4)}(x)=\varepsilon\left[(1-x^2)\varphi_{2j}^{(4)}(x)\right]'=C_{2j-1}^{(4)}\varepsilon\varphi_{2j-1}^{(4)}(x)-C_{2j}^{(4)}\varepsilon\varphi_{2j+1}^{(4)}(x).
\ee
The combination of (\ref{com1}) and (\ref{com2}) gives
\bea
K_{N}^{(4)}(x,y)&=&\frac{1}{2}\sum_{j=0}^{N-1}(1-x^2)\varphi_{2j+1}^{(4)}(x)\varphi_{2j+1}^{(4)}(y)
+\frac{1}{2}\sum_{j=0}^{N}(1-x^2)\varphi_{2j}^{(4)}(x)\varphi_{2j}^{(4)}(y)\no\\
&+&\frac{1}{2}C_{2N}^{(4)}\varepsilon\varphi_{2N+1}^{(4)}(x)\varphi_{2N}^{(4)}(y)\no\\
&=&\frac{1}{2}\sum_{j=0}^{2N}(1-x^2)\varphi_{j}^{(4)}(x)\varphi_{j}^{(4)}(y)+\frac{1}{2}C_{2N}^{(4)}\varepsilon\varphi_{2N+1}^{(4)}(x)
\varphi_{2N}^{(4)}(y).\no
\eea
The proof is complete.
\end{proof}

\begin{theorem}
For the Jacobi symplectic ensemble, we have
\be\label{gn4d}
\left[G_{N}^{(4)}(f)\right]^{2}=\det(I+T_{\mathrm{JSE}}),
\ee
where
$$
T_{\mathrm{JSE}}:=S_{N}^{(4)}f-\frac{1}{2}S_{N}^{(4)}\varepsilon f'+C_{2N}^{(4)}\left(\varepsilon\varphi_{2N+1}^{(4)}\right)\otimes\varphi_{2N}^{(4)}f
+\frac{1}{2}C_{2N}^{(4)}\left(\varepsilon\varphi_{2N+1}^{(4)}\right)\otimes\left(\varepsilon\varphi_{2N}^{(4)}\right) f'.
$$
\end{theorem}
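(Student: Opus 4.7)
The plan is a straightforward substitution: insert the decomposition (\ref{kn4a}) of $K_N^{(4)}$ as a Christoffel--Darboux style sum plus a rank-one correction into the Fredholm determinant identity (\ref{gn4}). Explicitly, from (\ref{kn4a}) one has
\[
2K_N^{(4)}f \;=\; S_N^{(4)}f + C_{2N}^{(4)}\bigl(\varepsilon\varphi_{2N+1}^{(4)}\bigr)\otimes\varphi_{2N}^{(4)}f,
\]
which accounts for the first and third summands of $T_{\mathrm{JSE}}$. Likewise, the $S_N^{(4)}$ part of $-K_N^{(4)}\varepsilon f'$ contributes precisely $-\tfrac{1}{2}S_N^{(4)}\varepsilon f'$, the second summand.

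The only nontrivial bookkeeping occurs in the rank-one portion
\[
-\tfrac{1}{2}C_{2N}^{(4)}\bigl(\varepsilon\varphi_{2N+1}^{(4)}\bigr)\otimes\varphi_{2N}^{(4)}\circ\varepsilon\circ f',
\]
which must be rewritten in the form of the fourth summand of $T_{\mathrm{JSE}}$. I would do this using the antisymmetry $\varepsilon^{t}=-\varepsilon$ recorded in the remark following the definition of $\varepsilon$. Concretely, for any test function $g$, the integral pairing yields
\[
\int_{-1}^{1}\varphi_{2N}^{(4)}(y)\,\varepsilon(f'g)(y)\,dy \;=\; -\int_{-1}^{1}(\varepsilon\varphi_{2N}^{(4)})(y)\,f'(y)\,g(y)\,dy,
\]
which shows, at the operator level, that
\[
\bigl(\varepsilon\varphi_{2N+1}^{(4)}\bigr)\otimes\varphi_{2N}^{(4)}\circ\varepsilon\circ f' \;=\; -\bigl(\varepsilon\varphi_{2N+1}^{(4)}\bigr)\otimes\bigl(\varepsilon\varphi_{2N}^{(4)}\bigr) f'.
\]
The two minus signs combine to produce $+\tfrac{1}{2}C_{2N}^{(4)}(\varepsilon\varphi_{2N+1}^{(4)})\otimes(\varepsilon\varphi_{2N}^{(4)})f'$, which is exactly the fourth summand.

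The main obstacle is purely notational rather than analytic: one must carefully distinguish the rank-one integral operator $a\otimes b$ from its compositions with $\varepsilon$ and with multiplication operators, and must track which factor is integrated against $y$ versus evaluated pointwise at $x$. Once the conventions are fixed, no further analytic input or asymptotic estimate is required; the previously established Fredholm identity (\ref{gn4}), the kernel decomposition (\ref{kn4a}), and the transpose identity $\varepsilon^{t}=-\varepsilon$ together finish the proof in essentially one line of operator arithmetic.
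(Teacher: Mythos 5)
Your proposal is correct and follows exactly the paper's own route: substitute the decomposition (\ref{kn4a}) into (\ref{gn4}) and absorb the composition with $\varepsilon$ into the rank-one term via $(u\otimes v)A=u\otimes(A^{t}v)$ together with $\varepsilon^{t}=-\varepsilon$, which is precisely the sign bookkeeping you carry out. Nothing is missing.
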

\begin{proof}
Substituting $K_N^{(4)}$ with the kernel given by (\ref{kn4a}) into (\ref{gn4}), we obtain the desired result with the aid of the property $(u\otimes v)A=u\otimes(A^{t}v)$ for integral operators.
\end{proof}

Finally we mention the following expansion formula
\be\label{logdet4}
\log\det(I+T_{\mathrm{JSE}})=\mathrm{Tr}\log(I+T_{\mathrm{JSE}})=\mathrm{Tr}\:T_{\mathrm{JSE}}-\frac{1}{2}\mathrm{Tr}\:T_{\mathrm{JSE}}^{2}
+\frac{1}{3}\mathrm{Tr}\:T_{\mathrm{JSE}}^{3}-\cdots,
\ee
which will be used in the asymptotic analysis in the next subsections.

\subsection{Scaling in the Bulk of the Spectrum in JSE}
Similarly as Theorem \ref{jue1}, we have the following theorem.
\begin{theorem}\label{jse1}
For $x, y\in\mathbb{R}$, we have as $N\rightarrow\infty$,
$$
\frac{1}{2N}S_{N}^{(4)}\left(\frac{x}{2N},\frac{y}{2N}\right)
=K_{\mathrm{sine}}(x,y)+O(N^{-1}),
$$
where $K_{\mathrm{sine}}(x,y)$ is the sine kernel given by (\ref{sine}). The error term is uniform for $x$ and $y$ in compact subsets of $\mathbb{R}$.
\end{theorem}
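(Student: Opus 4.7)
The plan is to adapt the Christoffel--Darboux argument from Theorem \ref{jue1} to the present kernel. From the explicit form of $\varphi_{j}^{(4)}$, we can rewrite
$$
S_{N}^{(4)}(x,y) = (1-x)^{a/2}(1+x)^{b/2}(1-y)^{a/2-1}(1+y)^{b/2-1}\sum_{j=0}^{2N}\frac{P_{j}^{(a-1,b-1)}(x)P_{j}^{(a-1,b-1)}(y)}{h_{j}^{(a-1,b-1)}}.
$$
Thus, up to bounded weight prefactors, $S_{N}^{(4)}$ is essentially the reproducing kernel for the Jacobi weight with parameters $(a-1,b-1)$ truncated at level $2N$, so the situation mirrors that of Theorem \ref{jue1} under the substitutions $(a,b)\mapsto(a-1,b-1)$ and $N\mapsto 2N+1$. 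Since there are $2N+1$ orthonormal functions involved, the mean spacing is of order $1/(2N)$, in agreement with the stated scaling.

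First, I would apply the Christoffel--Darboux identity to the sum above, producing a two-term expression in $P_{2N+1}^{(a-1,b-1)}$ and $P_{2N}^{(a-1,b-1)}$ entirely analogous to (\ref{cd}), with an explicit Gamma-function normalization. Next, parametrize $x = 2N\cos\theta$ and $y = 2N\cos\phi$ so that $\theta,\phi\to\pi/2$ as $N\to\infty$; the weight factors such as $(1\pm x/(2N))^{a/2}$ all tend to $1$, $\sin\theta\to 1$, and $\sin(\theta/2),\cos(\theta/2)\to 1/\sqrt{2}$. Substituting the Darboux asymptotic (\ref{asy}) with $n=2N,2N+1$ and parameters $(a-1,b-1)$ yields the same four-cosine combination of phases that appears in the proof of Theorem \ref{jue1}, and the identical product-to-sum manipulation collapses it into a single sine after Taylor-expanding the phases $(2N+(a+b-1)/2)\theta$ around $\theta=\pi/2$.

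Finally, Stirling's formula handles the Gamma-function prefactor and produces the $1/\pi$ constant; combined with the overall $1/(2N)$ scaling this gives $K_{\mathrm{sine}}(x,y)$, with a uniform $O(N^{-1})$ error on compact subsets of $\mathbb{R}$. The main subtlety, and the step that needs care, is the apparent asymmetry between $x$ and $y$ coming from the lone $(1-x^2)$ factor in the definition of $S_{N}^{(4)}$: under the bulk scaling $x\mapsto x/(2N)$ this factor becomes $1+O(N^{-2})$ and is absorbed into the error, restoring the symmetry required for a sine-kernel limit. One also needs to verify the Gamma-function bookkeeping after Stirling, to ensure that the parameter-dependent constants in $(a,b)$ cancel at leading order and that the universal prefactor $1/\pi$ indeed emerges.
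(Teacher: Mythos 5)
Your proposal is correct and follows essentially the same route the paper intends: the paper offers no separate proof of Theorem~\ref{jse1}, stating only that it goes ``similarly as Theorem~\ref{jue1},'' and your argument is exactly that adaptation --- Christoffel--Darboux for the $(a-1,b-1)$ Jacobi system truncated at degree $2N+1$, the Darboux asymptotics (\ref{asy}), the product-to-sum collapse near $\theta=\pi/2$, and Stirling for the constant. You also correctly flag and dispose of the one genuine wrinkle, namely that the lone $(1-x^2)$ factor makes $S_N^{(4)}$ asymmetric but contributes only $1+O(N^{-2})$ under the bulk scaling $x\mapsto x/(2N)$.
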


\begin{theorem}\label{jse2}
For $x\in\mathbb{R}$, we have as $N\rightarrow\infty$,
$$
\varphi_{2N}^{(4)}\left(\frac{x}{2N}\right)
=\sqrt{\frac{2}{\pi}}\sin\left[\frac{1}{4}\left(\pi+2\pi a-2(4N-1+a+b)\arccos\frac{x}{2N}\right)\right]+O(N^{-1}),
$$
$$
\varphi_{2N+1}^{(4)}\left(\frac{x}{2N}\right)
=\sqrt{\frac{2}{\pi}}\sin\left[\frac{1}{4}\left(\pi+2\pi a-2(4N+1+a+b)\arccos\frac{x}{2N}\right)\right]+O(N^{-1}),
$$
\bea
\varepsilon\varphi_{2N}^{(4)}\left(\frac{x}{2N}\right)
&=&\frac{1}{2N\sqrt{2\pi}}\bigg\{\sin\left[\frac{1}{4}\left(\pi+2\pi a-2(4N+1+a+b)\arccos\frac{x}{2N}\right)\right]\nonumber\\
&-&\sin\left[\frac{1}{4}\left(\pi+2\pi a-2(4N-3+a+b)\arccos\frac{x}{2N}\right)\right]\bigg\}+O(N^{-2}),\nonumber
\eea
\bea
\varepsilon\varphi_{2N+1}^{(4)}\left(\frac{x}{2N}\right)
&=&\frac{1}{2N\sqrt{2\pi}}\bigg\{\sin\left[\frac{1}{4}\left(\pi+2\pi a-2(4N+3+a+b)\arccos\frac{x}{2N}\right)\right]\nonumber\\
&-&\sin\left[\frac{1}{4}\left(\pi+2\pi a-2(4N-1+a+b)\arccos\frac{x}{2N}\right)\right]\bigg\}+O(N^{-2}).\nonumber
\eea
The error terms are uniform for $x$ in compact subsets of $\mathbb{R}$.
\end{theorem}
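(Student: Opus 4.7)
The proof splits into two pieces: asymptotics for the polynomial functions $\varphi_{2N}^{(4)}$, $\varphi_{2N+1}^{(4)}$ themselves, and then asymptotics for their $\varepsilon$-integrals. For the first piece, the plan is to apply Szeg\H{o}'s asymptotic formula (\ref{asy}) with $(a,b)$ replaced by $(a-1,b-1)$ and $n=2N$ or $n=2N+1$, multiply by the weight factor $(1-x)^{a/2-1}(1+x)^{b/2-1}$, divide by $\sqrt{h_{n}^{(a-1,b-1)}}$, and simplify the constants using Stirling's formula (the $\sqrt{n}$ factors inside and outside the square root cancel to produce the overall $\sqrt{2/\pi}$). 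After substituting $\theta=\arccos(x/(2N))$, the phase in Szeg\H{o}'s cosine becomes $\frac{1}{2}(2N+\frac{a+b-1}{2})\theta-\frac{\pi}{2}(a-\frac{1}{2})$ for $n=2N$; converting via $\cos\phi=\sin(\pi/2-\phi)$ produces exactly the stated phase $\frac{1}{4}(\pi+2\pi a-2(4N-1+a+b)\arccos\frac{x}{2N})$. The $\varphi_{2N+1}^{(4)}$ case is identical with $n=2N+1$, giving coefficient $(4N+1+a+b)$.

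For the second piece, the key observation is that the leading asymptotic $\sqrt{2/\pi}\sin[\alpha_{2N}(y)]$ admits an \emph{exact} antiderivative in closed form. Writing $\alpha_{2N}(\cos\theta')=C-M\theta'$ with $C=\pi(1+2a)/4$ and $M=(4N-1+a+b)/2$, the substitution $y=\cos\theta'$ and the product-to-sum identity $\sin(C-M\theta')\sin\theta' = \frac12[\cos(C-(M+1)\theta')-\cos(C-(M-1)\theta')]$ give
\[
F(y):=\int \sqrt{2/\pi}\sin[\alpha_{2N}(y)]\,dy = \frac{\sqrt{2/\pi}}{2}\left[\frac{\sin(C-(M+1)\arccos y)}{M+1}-\frac{\sin(C-(M-1)\arccos y)}{M-1}\right],
\]
which can be verified by differentiating back using $\cos A-\cos B=-2\sin\frac{A+B}{2}\sin\frac{A-B}{2}$ with $\frac{A+B}{2}=\alpha_{2N}(y)$ and $\frac{A-B}{2}=-\arccos y$. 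Since $\varepsilon\varphi(y)=\int_{-1}^{y}\varphi\,dy'-\tfrac12\int_{-1}^{1}\varphi\,dy'=F(y)-\tfrac12[F(1)+F(-1)]$, I need to estimate $F(\pm 1)$. At $y=1$, $\arccos y=0$, so the two sines are equal and $F(1)=O(1/(M^{2}-1))=O(N^{-2})$; at $y=-1$, $\arccos y=\pi$, and using $\sin(C-(M\pm 1)\pi)=-\sin(C-M\pi)$ the same cancellation gives $F(-1)=O(N^{-2})$. Finally expanding $1/(M\pm 1)=1/(2N)+O(N^{-2})$ converts $F(y)$ into the stated expression $\frac{1}{2N\sqrt{2\pi}}\{\sin[\cdots(4N+1+a+b)\cdots]-\sin[\cdots(4N-3+a+b)\cdots]\}$. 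The $\varepsilon\varphi_{2N+1}^{(4)}$ case is identical with $M=(4N+1+a+b)/2$, producing the shifts $(4N+3+a+b)$ and $(4N-1+a+b)$.

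The main obstacle is controlling the error from using only the leading term of Szeg\H{o}'s formula inside an integral of length $O(1)$. Naively, the additive error $O(N^{-1})$ in $\varphi_{2N}^{(4)}(y)$ would integrate to $O(N^{-1})$, spoiling the claimed $O(N^{-2})$. The resolution is two-fold: (i) the subleading correction in Szeg\H{o}'s expansion is itself oscillatory, so one more integration by parts against its phase gains an additional $1/N$; (ii) near the singular endpoints $y=\pm 1$, where Szeg\H{o}'s formula breaks down, one must use the Hilb-type formula (\ref{asy2}) on a shrinking neighborhood, and the rapid oscillation (now in the Bessel regime) keeps the contribution $O(N^{-2})$. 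Following the paper's stated heuristic convention, I would invoke these estimates without a full rigorous derivation, focusing the exposition on the explicit identification of $F(y)$ and the endpoint cancellations that produce the stated closed form.
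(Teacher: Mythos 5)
Your proposal is correct and follows essentially the same route as the paper, whose entire proof is the single sentence ``By using the asymptotic formula of the Jacobi polynomials (\ref{asy}), we obtain the desired results after direct calculations'': you apply Szeg\H{o}'s formula with parameters $(a-1,b-1)$, and your explicit antiderivative $F(y)$ with the endpoint cancellations at $y=\pm1$ is exactly the ``direct calculation'' the authors leave implicit, with all phases and constants checking out. Your candid flagging of the error-order issue (a naive $O(N^{-1})$ integrated error would swamp the $O(N^{-1})$ main term of $\varepsilon\varphi$) identifies a real gap that the paper itself glosses over under its declared heuristic convention, so no objection there.
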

\begin{proof}
By using the asymptotic formula of the Jacobi polynomials (\ref{asy}), we obtain the desired results after direct calculations.
\end{proof}
\begin{remark}
It is easy to see from the above theorem that as $N\rightarrow\infty$,
$$
\varphi_{2N}^{(4)}\left(\frac{x}{2N}\right)=O(1),\qquad\qquad\qquad\qquad \varphi_{2N+1}^{(4)}\left(\frac{x}{2N}\right)=O(1),
$$
$$
\varepsilon\varphi_{2N}^{(4)}\left(\frac{x}{2N}\right)=O(N^{-1}),\qquad\qquad\qquad\qquad \varepsilon\varphi_{2N+1}^{(4)}\left(\frac{x}{2N}\right)=O(N^{-1}),
$$
uniformly for $x$ in compact subsets of $\mathbb{R}$.
\end{remark}

We now use Theorem \ref{jse1} and \ref{jse2} to compute (\ref{logdet4}) as $N\rightarrow\infty$, and change $f(x)$ to $f(2Nx)$ in the calculations (in this case $f'(x)$ becomes $2Nf'(2Nx)$). We first consider $\mathrm{Tr}\:T_{\mathrm{JSE}}$:
\bea\label{trjse}
\mathrm{Tr}\:T_{\mathrm{JSE}}&=&\mathrm{Tr}\:S_{N}^{(4)}f-\mathrm{Tr}\:\frac{1}{2}S_{N}^{(4)}\varepsilon f'+\mathrm{Tr}\:C_{2N}^{(4)}\left(\varepsilon\varphi_{2N+1}^{(4)}\right)\otimes\varphi_{2N}^{(4)}f\no\\
&+&\mathrm{Tr}\:\frac{1}{2}C_{2N}^{(4)}\left(\varepsilon\varphi_{2N+1}^{(4)}\right)\otimes\left(\varepsilon\varphi_{2N}^{(4)}\right) f'.
\eea
The first term gives
\bea
\mathrm{Tr}\:S_{N}^{(4)}f
&=&\int_{-1}^{1}S_{N}^{(4)}(x,x)f(2Nx)dx\nonumber\\
&=&\int_{-2N}^{2N}\frac{1}{2N}S_{N}^{(4)}
\left(\frac{x}{2N},\frac{x}{2N}\right)f(x)dx\nonumber\\
&=&\int_{-\infty}^{\infty}K_{\mathrm{sine}}(x,x)f(x)dx+O(N^{-1}),\qquad N\rightarrow\infty.\nonumber
\eea
\begin{remark}
We assume that $f(\cdot)$ is smooth and sufficiently decreasing at $\pm\infty$ to make the integrals well-defined.
\end{remark}
It can be shown that the rest terms have contributions of $O(N^{-1})$, where we have used the fact
$$
\int_{x}^{\infty}K_{\mathrm{sine}}(x,y)dy-\int_{-\infty}^{x}K_{\mathrm{sine}}(x,y)dy=0
$$
in the calculation of $\mathrm{Tr}\:\frac{1}{2}S_{N}^{(4)}\varepsilon f'$. Hence,
\be\label{tr1}
\mathrm{Tr}\:T_{\mathrm{JSE}}=\int_{-\infty}^{\infty}K_{\mathrm{sine}}(x,x)f(x)dx+O(N^{-1}).
\ee
Next, we consider $\mathrm{Tr}\:T_{\mathrm{JSE}}^{2}$:
\bs
\bea\label{trjse2}
\mathrm{Tr}\:T_{\mathrm{JSE}}^{2}
&=&\mathrm{Tr}\left(S_{N}^{(4)}f\right)^2-\mathrm{Tr}\:S_{N}^{(4)}f S_{N}^{(4)}\varepsilon f'+\mathrm{Tr}\:2C_{2N}^{(4)}S_{N}^{(4)}f\left(\varepsilon\varphi_{2N+1}^{(4)}
\otimes\varphi_{2N}^{(4)}\right)f\no\\
&+&\mathrm{Tr}\:C_{2N}^{(4)}S_{N}^{(4)}f\left(\varepsilon\varphi_{2N+1}^{(4)}
\otimes\varepsilon\varphi_{2N}^{(4)}\right)f'+\mathrm{Tr}\:\frac{1}{4}\left(S_{N}^{(4)}\varepsilon f'\right)^2\no\\
&-&\mathrm{Tr}\:C_{2N}^{(4)}S_{N}^{(4)}\varepsilon f'\left(\varepsilon\varphi_{2N+1}^{(4)}\otimes\varphi_{2N}^{(4)}\right)f-\mathrm{Tr}\:\frac{1}{2}C_{2N}^{(4)}S_{N}^{(4)}\varepsilon f'\left(\varepsilon\varphi_{2N+1}^{(4)}\otimes\varepsilon\varphi_{2N}^{(4)}\right)f'
\no\\
&+&\mathrm{Tr}\left(C_{2N}^{(4)}\right)^2\left(\varepsilon\varphi_{2N+1}^{(4)}
\otimes\varphi_{2N}^{(4)}f\right)^2+\mathrm{Tr}\:\frac{1}{4}\left(C_{2N}^{(4)}\right)^2\left(\varepsilon\varphi_{2N+1}^{(4)}
\otimes\varepsilon\varphi_{2N}^{(4)}f'\right)^2\no\\
&+&\mathrm{Tr}\left(C_{2N}^{(4)}\right)^2\left(\varepsilon\varphi_{2N+1}^{(4)}
\otimes\varphi_{2N}^{(4)}f\right)\left(\varepsilon\varphi_{2N+1}^{(4)}
\otimes\varepsilon\varphi_{2N}^{(4)}f'\right).
\eea
\es
We find as $N\rightarrow\infty$,
$$
\mathrm{Tr}\left(S_{N}^{(4)}f\right)^2=\int_{-\infty}^{\infty}\int_{-\infty}^{\infty}K_{\mathrm{sine}}^2(x,y)f(x)f(y)dxdy+O(N^{-1}),
$$
$$
\mathrm{Tr}\:S_{N}^{(4)}f S_{N}^{(4)}\varepsilon f'=-\frac{1}{2\pi^2}\int_{-\infty}^{\infty}\int_{-\infty}^{\infty}\mathrm{Si}^2(x-y)f'(x)f'(y)dxdy+O(N^{-1}),
$$
$$
\mathrm{Tr}\:\frac{1}{4}\left(S_{N}^{(4)}\varepsilon f'\right)^2=-\frac{1}{4\pi^2}\int_{-\infty}^{\infty}\int_{-\infty}^{\infty}\mathrm{Si}^2(x-y)f'(x)f'(y)dxdy+O(N^{-1}),
$$
where we have used integration by parts to obtain the second equality and the formula
$$
\int_{-\infty}^{x}K_{\mathrm{sine}}(y,z)dz-\int_{x}^{\infty}K_{\mathrm{sine}}(y,z)dz=\frac{2}{\pi}\:\mathrm{Si}(x-y),
$$
and $\mathrm{Si}(x)$ is the sine integral defined by \cite[Sec. 3.3]{Lebedev}
$$
\mathrm{Si}(x):=\int_{0}^{x}\frac{\sin t}{t}dt.
$$
The rest terms in (\ref{trjse2}) are proven to have contributions of $O(N^{-1})$ after some elaborate computations. Hence,
\bea\label{tr2}
\mathrm{Tr}\:T_{\mathrm{JSE}}^{2}&=&\int_{-\infty}^{\infty}\int_{-\infty}^{\infty}K_{\mathrm{sine}}^2(x,y)f(x)f(y)dxdy\no\\
&+&\frac{1}{4\pi^2}\int_{-\infty}^{\infty}\int_{-\infty}^{\infty}\mathrm{Si}^2(x-y)f'(x)f'(y)dxdy+O(N^{-1}).
\eea

Proceeding as in the JUE case, we substitute (\ref{fx}) into (\ref{tr1}) and (\ref{tr2}) and finally heuristically obtain from (\ref{logdet4}) that
\bea
\log\det(I+T_{\mathrm{JSE}})
&=&-\lambda\int_{-\infty}^{\infty}K_{\mathrm{sine}}(x,x)F(x)dx+\frac{\lambda^2}{2}\bigg[\int_{-\infty}^{\infty}K_{\mathrm{sine}}(x,x)F^{2}(x)dx\no\\
&-&\int_{-\infty}^{\infty}\int_{-\infty}^{\infty}K_{\mathrm{sine}}^{2}(x,y)F(x)F(y)dxdy\no\\
&-&\frac{1}{4\pi^2}\int_{-\infty}^{\infty}\int_{-\infty}^{\infty}\mathrm{Si}^2(x-y)F'(x)F'(y)dxdy\bigg]+\cdots+O(N^{-1}).\no
\eea
Since we have $\log G_{N}^{(4)}(f)=\frac{1}{2}\log\det(I+T_{\mathrm{JSE}})$ from (\ref{gn4d}), the following theorem follows.
\begin{theorem}
Let $\mu_{N}^{(\mathrm{JSE})}$ and $\mathcal{V}_{N}^{(\mathrm{JSE})}$ be the mean and variance of the scaled linear statistics
$\sum_{j=1}^{N}F(2Nx_j)$. We have as $N\rightarrow\infty$,
$$
\mu_{N}^{(\mathrm{JSE})}=\frac{1}{2}\mu_{N}^{(\mathrm{JUE})}+O(N^{-1}),
$$
$$
\mathcal{V}_{N}^{(\mathrm{JSE})}
=\frac{1}{2}\mathcal{V}_{N}^{(\mathrm{JUE})}-\frac{1}{8\pi^2}\int_{-\infty}^{\infty}\int_{-\infty}^{\infty}\mathrm{Si}^2(x-y)F'(x)F'(y)dxdy+O(N^{-1}),
$$
where $\mu_{N}^{(\mathrm{JUE})}$ and $\mathcal{V}_{N}^{(\mathrm{JUE})}$ are given by (\ref{juem}) and (\ref{juev}), respectively.
\end{theorem}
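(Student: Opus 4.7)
The plan is to extract $\mu_{N}^{(\mathrm{JSE})}$ and $\mathcal{V}_{N}^{(\mathrm{JSE})}$ from the $\lambda$-expansion of $\log G_{N}^{(4)}(f)$. Because
\[
\log \mathbb{E}\!\left(\mathrm{e}^{-\lambda \sum_{j=1}^{N} F(2Nx_j)}\right) = -\lambda\,\mu_{N}^{(\mathrm{JSE})} + \frac{\lambda^{2}}{2}\,\mathcal{V}_{N}^{(\mathrm{JSE})} + O(\lambda^{3}),
\]
and $\log G_{N}^{(4)}(f)=\frac{1}{2}\log\det(I+T_{\mathrm{JSE}})$ by (\ref{gn4d}), it is enough to identify the coefficients of $\lambda$ and $\lambda^{2}$ on the right-hand side, to leading order in $N$.

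My first step would be to insert the asymptotic formulas (\ref{tr1}) and (\ref{tr2}) --- already derived from Theorems \ref{jse1} and \ref{jse2}, integration by parts, and the sine-integral representation --- into the trace-log expansion (\ref{logdet4}). Each trace $\mathrm{Tr}\,T_{\mathrm{JSE}}^{k}$ with $k\geq 3$ carries at least three factors of $f$, so after substituting the series (\ref{fx}) it only contributes at order $\lambda^{j}$ with $j\geq 3$ and is irrelevant for the mean and variance. Plugging (\ref{fx}) into the retained combination $\mathrm{Tr}\,T_{\mathrm{JSE}}-\frac{1}{2}\mathrm{Tr}\,T_{\mathrm{JSE}}^{2}$, collecting the coefficient of $-\lambda$ and halving it, yields
\[
\mu_{N}^{(\mathrm{JSE})} = \frac{1}{2}\int_{-\infty}^{\infty}K_{\mathrm{sine}}(x,x)F(x)\,dx + O(N^{-1}),
\]
which is $\frac{1}{2}\,\mu_{N}^{(\mathrm{JUE})}+O(N^{-1})$ by (\ref{juem}).

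For the variance, I would read off the coefficient of $\lambda^{2}/2$ in the same combination. The $\int K_{\mathrm{sine}}(x,x)F^{2}(x)dx$ contribution comes from the $\frac{\lambda^{2}}{2}F^{2}$ term of (\ref{fx}) inside $\mathrm{Tr}\,T_{\mathrm{JSE}}$, while the $-\iint K_{\mathrm{sine}}^{2}F F$ term comes from $-\frac{1}{2}\mathrm{Tr}\,T_{\mathrm{JSE}}^{2}$ using the first summand of (\ref{tr2}); together these reproduce $\mathcal{V}_{N}^{(\mathrm{JUE})}$ in view of (\ref{juev}). The second summand of (\ref{tr2}), namely $\frac{1}{4\pi^{2}}\iint \mathrm{Si}^{2}(x-y)f'(x)f'(y)\,dx\,dy$, acquires the global prefactor $-\frac{1}{2}$ from the trace-log expansion and a further $\frac{1}{2}$ from $\log G_{N}^{(4)}=\frac{1}{2}\log\det$, producing exactly the advertised symplectic correction $-\frac{1}{8\pi^{2}}\iint \mathrm{Si}^{2}(x-y)F'(x)F'(y)\,dx\,dy$.

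The main obstacle will be a clean justification of the $O(N^{-1})$ claims tucked inside the passages (\ref{trjse})$\to$(\ref{tr1}) and (\ref{trjse2})$\to$(\ref{tr2}): every mixed piece built from $C_{2N}^{(4)}$, $\varphi_{2N}^{(4)}$, $\varphi_{2N+1}^{(4)}$, $\varepsilon\varphi_{2N}^{(4)}$, and $\varepsilon\varphi_{2N+1}^{(4)}$ has to be verified to be $O(N^{-1})$. The balance to exploit is $C_{2N}^{(4)}=O(N)$ against the decay $\varepsilon\varphi_{2N\pm 1}^{(4)}(x/(2N))=O(N^{-1})$ from Theorem \ref{jse2}, the boundedness of $\varphi_{2N\pm 1}^{(4)}(x/(2N))$, and the Jacobian $1/(2N)$ from the rescaling. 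A term-by-term check should confirm that in each case at least one unmatched $\varepsilon\varphi$ factor survives and produces the required decay, so that (\ref{tr1})--(\ref{tr2}) really do capture the only contributions relevant at order $\lambda$ and $\lambda^{2}$.
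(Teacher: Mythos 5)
Your proposal is correct and follows essentially the same route as the paper: insert the asymptotic trace formulas (\ref{tr1})--(\ref{tr2}) into the expansion (\ref{logdet4}), substitute (\ref{fx}), halve via $\log G_{N}^{(4)}(f)=\frac{1}{2}\log\det(I+T_{\mathrm{JSE}})$, and read off the coefficients of $\lambda$ and $\lambda^{2}$; your bookkeeping of the prefactors on the $\mathrm{Si}^{2}$ term is exactly right. You also correctly identify the real labor --- checking that the mixed terms involving $C_{2N}^{(4)}=O(N)$ against $\varepsilon\varphi_{2N}^{(4)},\varepsilon\varphi_{2N+1}^{(4)}=O(N^{-1})$ and the Jacobian $1/(2N)$ are all $O(N^{-1})$ --- which the paper itself only asserts after ``some elaborate computations.''
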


\subsection{Scaling at the Edge of the Spectrum in JSE}
Similarly as Theorem \ref{jue2}, we have the following result.
\begin{theorem}\label{jse3}
For $x, y\in\mathbb{R}^{+}$, we have as $N\rightarrow\infty$,
$$
\frac{1}{8N^2}S_{N}^{(4)}\left(1-\frac{x}{8N^2},1-\frac{y}{8N^2}\right)
=\sqrt{\frac{x}{y}}\:K_{\mathrm{Bessel}}^{(a-1)}(x,y)+O(N^{-1}),
$$
where $K_{\mathrm{Bessel}}^{(a-1)}(x,y)$ is the Bessel kernel of order $a-1$ given by
$$
K_{\mathrm{Bessel}}^{(a-1)}(x,y)=\frac{J_{a-1}(\sqrt{x})\sqrt{y}J_{a-1}'(\sqrt{y})-J_{a-1}'(\sqrt{x})\sqrt{x}J_{a-1}(\sqrt{y})}{2(x-y)}.
$$
The error term is uniform for $x$ and $y$ in compact subsets of $\mathbb{R}^{+}$.
\end{theorem}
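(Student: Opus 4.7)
The plan is to mirror the proof of Theorem \ref{jue2}, adapted to the $S_N^{(4)}$ kernel. Using the explicit form of $\varphi_j^{(4)}$, I would first rewrite
$$
S_N^{(4)}(x,y) = (1-x)^{a/2}(1+x)^{b/2}(1-y)^{a/2-1}(1+y)^{b/2-1}\sum_{j=0}^{2N}\frac{P_j^{(a-1,b-1)}(x)P_j^{(a-1,b-1)}(y)}{h_j^{(a-1,b-1)}}.
$$
The sum is the Christoffel--Darboux kernel for Jacobi polynomials with parameters $(a-1,b-1)$ truncated at index $2N$, so I apply the Christoffel--Darboux formula, exactly as in (\ref{cd}) but with $N\mapsto 2N+1$ and $(a,b)\mapsto(a-1,b-1)$, to collapse it into an expression built from $P_{2N}^{(a-1,b-1)}$ and $P_{2N+1}^{(a-1,b-1)}$ together with a Gamma-function prefactor.

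Next, I substitute $x\mapsto 1-x/(8N^2)$ and $y\mapsto 1-y/(8N^2)$. The factor $1/(8N^2)=1/(2(2N)^2)$ is natural because the dominant polynomial has degree of order $2N$: writing $\cos\theta = 1-x/(8N^2)$ gives $\theta\sim\sqrt{x}/(2N)$, so that $(2N+O(1))\theta\to\sqrt{x}$, which is the correct argument for the Bessel kernel of order $a-1$. Applying the Hilb-type asymptotic (\ref{asy2}) with parameters $(a-1,b-1)$ to $P_{2N}^{(a-1,b-1)}$ and $P_{2N+1}^{(a-1,b-1)}$ then yields Bessel functions $J_{a-1}$, and Stirling's formula matches the Gamma-function prefactor against the $1/(8N^2)$ scaling.

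The new ingredient compared to Theorem \ref{jue2} is the asymmetric weight $(1-x)^{a/2}(1-y)^{a/2-1}$. In the Hilb formula, the factor $(\sin(\theta/2))^{a-1}$ on the left-hand side combines with $(1-x)^{a/2} = (2\sin^2(\theta/2))^{a/2}$ to leave, up to constants, a single power of $\sin(\theta/2)$ in $x$; for $y$, combining $(1-y)^{a/2-1}$ with $(\sin(\phi/2))^{a-1}$ leaves instead an inverse power $\sin(\phi/2)^{-1}$. In the scaling limit $\theta\sim\sqrt{x}/(2N)$ and $\phi\sim\sqrt{y}/(2N)$, the ratio $\sin(\theta/2)/\sin(\phi/2)\to\sqrt{x/y}$, producing precisely the $\sqrt{x/y}$ factor appearing in the theorem. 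The remaining step is, as in the JUE proof, to rewrite the bilinear difference $P_{2N+1}^{(a-1,b-1)}(x)P_{2N}^{(a-1,b-1)}(y) - P_{2N}^{(a-1,b-1)}(x)P_{2N+1}^{(a-1,b-1)}(y)$ as a telescoping combination so that the finite differences of Bessel functions (whose arguments differ only by $1/(2N)$) linearize into derivatives $J_{a-1}'$, assembling $K_{\mathrm{Bessel}}^{(a-1)}(x,y)$ in the form of (\ref{Bessel}).

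The main obstacle is the careful bookkeeping of the asymmetric weight factors; one must resist the temptation to symmetrize, since the $\sqrt{x/y}$ prefactor is genuinely an artifact of that asymmetry and would be destroyed by any naive symmetric manipulation. Once this is handled correctly, the rest of the argument is parallel to that of Theorem \ref{jue2} under the formal substitutions $(a,b)\mapsto(a-1,b-1)$, $N\mapsto 2N$, and $2N^2\mapsto 8N^2$, with all error terms remaining uniform for $x, y$ in compact subsets of $\mathbb{R}^{+}$.
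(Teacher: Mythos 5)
Your proposal is correct and follows exactly the route the paper intends: the paper gives no explicit proof of Theorem \ref{jse3}, stating only that it is ``similar to Theorem \ref{jue2},'' and your argument fills in precisely that analogy --- Christoffel--Darboux for the $(a-1,b-1)$ kernel truncated at index $2N$, the Hilb-type asymptotic (\ref{asy2}), and Stirling's formula, with the scaling $1-x/(8N^2)=1-x/(2(2N)^2)$ matching the degree $2N+1$. Your accounting of the asymmetric weight $(1-x)^{a/2}(1-y)^{a/2-1}$, whose net effect is the ratio $\sin(\theta/2)/\sin(\phi/2)\to\sqrt{x/y}$, correctly identifies the one genuinely new ingredient relative to the JUE case.
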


\begin{theorem}\label{jse4}
For $x\in\mathbb{R}^{+}$, we have as $N\rightarrow\infty$,
$$
\varphi_{2N}^{(4)}\left(1-\frac{x}{8N^2}\right)
=(2N)^{3/2}\:\frac{J_{a-1}(\sqrt{x})}{\sqrt{x}}+O(N^{1/2}),
$$
$$
\varphi_{2N+1}^{(4)}\left(1-\frac{x}{8N^2}\right)
=(2N)^{3/2}\:\frac{J_{a-1}(\sqrt{x})}{\sqrt{x}}+O(N^{1/2}),
$$
$$
\varepsilon\varphi_{2N}^{(4)}\left(1-\frac{x}{8N^2}\right)
=2^{-3/2}N^{-1/2}\left(1-2\mathbf{J}_{a-1}(\sqrt{x})\right)+O(N^{-3/2}),
$$
$$
\varepsilon\varphi_{2N+1}^{(4)}\left(1-\frac{x}{8N^2}\right)
=2^{-3/2}N^{-1/2}\left(1-2\mathbf{J}_{a-1}(\sqrt{x})\right)+O(N^{-3/2}),
$$
where
\be\label{jx}
\mathbf{J}_{a-1}(x):=\int_{0}^{x}J_{a-1}(t)dt.
\ee
The error terms are uniform for $x$ in compact subsets of $\mathbb{R}^{+}$.
\end{theorem}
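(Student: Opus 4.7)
The plan is to exploit the Hilb-type asymptotic formula (\ref{asy2}) in exactly the same spirit as in the proof of Theorem \ref{jse2}, applied to the shifted Jacobi polynomials $P_n^{(a-1,b-1)}(x)$ that appear in the definition of $\varphi_j^{(4)}$. Writing $\cos\theta = 1 - x/(8N^2)$, one has $\theta = \sqrt{x}/(2N) + O(N^{-3})$ for $x$ in compact subsets of $\mathbb{R}^+$, so $\sin(\theta/2) \sim \sqrt{x}/(4N)$, $\cos(\theta/2) \to 1$, and $(n+(a+b-1)/2)\theta \to \sqrt{x}$ as $N\to\infty$ for $n \in \{2N, 2N+1\}$. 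Together with Stirling's formula applied to $\Gamma(n+a)/n!$ and to the normalization constant $h_{n}^{(a-1,b-1)}$, this is the engine that drives every part of the theorem.

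For the two identities on $\varphi_{2N}^{(4)}$ and $\varphi_{2N+1}^{(4)}$: I would substitute (\ref{asy2}) directly into the definition $\varphi_j^{(4)}(x)=P_j^{(a-1,b-1)}(x)(1-x)^{a/2-1}(1+x)^{b/2-1}/\sqrt{h_j^{(a-1,b-1)}}$. The prefactor $(1-\cos\theta)^{a/2-1}(1+\cos\theta)^{b/2-1} = 2^{(a+b)/2-2}(\sin(\theta/2))^{a-2}(\cos(\theta/2))^{b-2}$ cancels almost completely against $(\sin(\theta/2))^{-(a-1)}(\cos(\theta/2))^{-(b-1)}$ from the Hilb formula, leaving behind a single factor of $(\sin\theta)^{-1}\sim 2N/\sqrt{x}$. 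After tracking the explicit powers of $N$ one should land on $(2N)^{3/2} J_{a-1}(\sqrt{x})/\sqrt{x}$, with the error $O(N^{1/2})$ coming from the $O(n^{-3/2})$ remainder in (\ref{asy2}) amplified by the $N^{a-1}$-type prefactors. The two formulas for indices $2N$ and $2N+1$ are handled identically since the leading Bessel argument coincides up to $O(N^{-1})$.

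For the $\varepsilon\varphi$ formulas I would use the decomposition
$$
\varepsilon\varphi_{2N}^{(4)}(y) = \frac{1}{2}\int_{-1}^{1}\varphi_{2N}^{(4)}(z)\,dz - \int_{y}^{1}\varphi_{2N}^{(4)}(z)\,dz,
$$
and similarly for $\varphi_{2N+1}^{(4)}$. Setting $y = 1 - x/(8N^2)$, the second integral becomes, after the change of variables $z = 1 - t/(8N^2)$ and then $u = \sqrt{t}$, a direct consequence of the pointwise Bessel asymptotic obtained in the previous step: it evaluates to $2^{-1/2} N^{-1/2}\mathbf{J}_{a-1}(\sqrt{x}) + O(N^{-3/2})$ with $\mathbf{J}_{a-1}$ as in (\ref{jx}). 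The full-interval integral $\int_{-1}^{1}\varphi_{2N}^{(4)}(z)\,dz$ is then computed by the same local substitution over a shrinking neighborhood of $z=1$, using the classical identity $\int_{0}^{\infty} J_{a-1}(u)\,du = 1$ valid for $a > 0$, which gives $2^{-1/2} N^{-1/2}$. The main obstacle is to show that the contributions from the bulk and from the opposite (left) edge $z = -1$ are genuinely $O(N^{-3/2})$: the bulk piece is oscillatory and can be controlled by the asymptotic (\ref{asy}) combined with an integration-by-parts argument, while the left-edge contribution requires the analogue of (\ref{asy2}) at $\theta\to\pi$ (i.e., with $a$ and $b$ interchanged and a sign flip coming from $P_n^{(a,b)}(-x) = (-1)^n P_n^{(b,a)}(x)$), giving a contribution of size $N^{-1/2}$ that, by a careful sign analysis for even index $n=2N$ versus odd index $n=2N+1$, must be of higher order and not contaminate the stated leading term.
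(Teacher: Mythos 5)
Your strategy coincides with the paper's own (two-line) proof: apply the Hilb-type formula (\ref{asy2}) to get the pointwise Bessel asymptotics of $\varphi_{2N}^{(4)}$ and $\varphi_{2N+1}^{(4)}$, then integrate the Bessel profile and use $\int_0^\infty J_{a-1}(t)\,dt=1$ to handle $\varepsilon\varphi$. Your treatment of the first two formulas and of the near-edge integral $\int_{y}^{1}\varphi^{(4)}_n(z)\,dz \approx 2^{-1/2}N^{-1/2}\mathbf{J}_{a-1}(\sqrt{x})$ is correct. The gap sits exactly in the step you flag as ``the main obstacle'': the contribution of the opposite hard edge $z=-1$ to $\int_{-1}^{1}\varphi_{n}^{(4)}(z)\,dz$ is \emph{not} of higher order. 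Using $P_n^{(a-1,b-1)}(-z)=(-1)^nP_n^{(b-1,a-1)}(z)$ and the symmetry of $h_n^{(a-1,b-1)}$ in $a$ and $b$, the same Hilb analysis at $z=-1$ gives $\varphi_n^{(4)}(-1+\tfrac{t}{8N^2})\approx(-1)^n(2N)^{3/2}J_{b-1}(\sqrt{t})/\sqrt{t}$, so the left edge contributes $(-1)^n\,2^{-1/2}N^{-1/2}\int_0^\infty J_{b-1}(u)\,du=(-1)^n2^{-1/2}N^{-1/2}$ --- the same order as the right-edge contribution $2^{-1/2}N^{-1/2}$, with a parity-dependent sign. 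Hence $\int_{-1}^1\varphi_{2N}^{(4)}\sim 2^{1/2}N^{-1/2}$ (twice what your argument requires), while $\int_{-1}^1\varphi_{2N+1}^{(4)}=O(N^{-1})$. A concrete check: for $a=b=2$ one has $\varphi_n^{(4)}=P_n^{(1,1)}/\sqrt{h_n^{(1,1)}}$ with $P_n^{(1,1)}=\frac{2}{n+2}\frac{d}{dx}P_{n+1}$, so $\int_{-1}^1P_{2N}^{(1,1)}(z)\,dz=\frac{4}{2N+2}$ and $\int_{-1}^1\varphi_{2N}^{(4)}(z)\,dz=\sqrt{2}\,N^{-1/2}\left(1+O(N^{-1})\right)$, not $2^{-1/2}N^{-1/2}$; for odd degree the integral vanishes identically.

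Consequently your decomposition yields $\varepsilon\varphi_{2N}^{(4)}(1-\tfrac{x}{8N^2})=2^{-1/2}N^{-1/2}\left(1-\mathbf{J}_{a-1}(\sqrt{x})\right)+\cdots$ and $\varepsilon\varphi_{2N+1}^{(4)}(1-\tfrac{x}{8N^2})=-2^{-1/2}N^{-1/2}\,\mathbf{J}_{a-1}(\sqrt{x})+\cdots$, whose \emph{average} --- but neither of which individually --- equals the stated $2^{-3/2}N^{-1/2}\left(1-2\mathbf{J}_{a-1}(\sqrt{x})\right)$. So the ``careful sign analysis'' you defer cannot be completed as described: it shows instead that the two $\varepsilon\varphi$ asymptotics must be parity-dependent. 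Be aware that the paper's own computation makes the same formal leap, replacing $\varphi_n^{(4)}$ by its right-edge Bessel profile over all of $[-1,1]$ and invoking $\int_x^\infty\frac{J_{a-1}(\sqrt{y})}{\sqrt{y}}\,dy-\int_0^x\frac{J_{a-1}(\sqrt{y})}{\sqrt{y}}\,dy=2\left(1-2\mathbf{J}_{a-1}(\sqrt{x})\right)$, thereby silently discarding the $(-1)^n2^{-1/2}N^{-1/2}$ term generated by the second hard edge (a term absent in the Laguerre analogue, where there is only one hard edge). You have correctly located the weak point of the argument, but the assertion you propose to prove there --- that the left edge is negligible --- is precisely the one that fails.
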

\begin{proof}
The results come from direct computations by using the large $n$ Hilb-type asymptotic formula of the Jacobi polynomials (\ref{asy2}), and the formula
$$
\int_{x}^{\infty}\frac{J_{a-1}(\sqrt{y})}{\sqrt{y}}dy-\int_{0}^{x}\frac{J_{a-1}(\sqrt{y})}{\sqrt{y}}dy=2\left(1-2\mathbf{J}_{a-1}(\sqrt{x})\right),
$$
where use has been made of the fact that $\int_{0}^{\infty}J_{a-1}(t)dt=1$ (see, e.g., \cite[p. 659]{GR}).
\end{proof}

Using Theorem \ref{jse3} and \ref{jse4} to compute (\ref{trjse}) term by term and changing $f(x)$ to $f(8N^2(1-x))$, we find
$$
\mathrm{Tr}\:S_{N}^{(4)}f
=\int_{0}^{\infty}K_{\mathrm{Bessel}}^{(a-1)}(x,x)f(x)dx+O(N^{-1}),
$$
$$
\mathrm{Tr}\:\frac{1}{2}S_{N}^{(4)}\varepsilon f'=-\frac{1}{4}\int_{0}^{\infty}L^{(a-1)}(x,x)f'(x)dx+O(N^{-1}),
$$
$$
\mathrm{Tr}\:C_{2N}^{(4)}\left(\varepsilon\varphi_{2N+1}^{(4)}\right)\otimes\varphi_{2N}^{(4)}f=\frac{1}{8}\int_{0}^{\infty}
\frac{J_{a-1}(\sqrt{x})}{\sqrt{x}}\left(1-2\mathbf{J}_{a-1}(\sqrt{x})\right)f(x)dx+O(N^{-1}),
$$
$$
\mathrm{Tr}\:\frac{1}{2}C_{2N}^{(4)}\left(\varepsilon\varphi_{2N+1}^{(4)}\right)\otimes\left(\varepsilon\varphi_{2N}^{(4)}\right) f'=-\frac{1}{16}\int_{0}^{\infty}\left(1-2\mathbf{J}_{a-1}(\sqrt{x})\right)^2f'(x)dx+O(N^{-1}),
$$
where
\be\label{lxy}
L^{(a-1)}(x,y):=\int_{0}^{x}\sqrt{\frac{y}{z}}\:K_{\mathrm{Bessel}}^{(a-1)}(y,z)dz-\int_{x}^{\infty}\sqrt{\frac{y}{z}}\:K_{\mathrm{Bessel}}^{(a-1)}(y,z)dz.
\ee
\begin{remark}
We assume that $f(\cdot)$ is smooth and sufficiently decreasing at infinity to make the integrals well-defined.
\end{remark}
\noindent Through integration by parts, we have the formula
$$
\int_{0}^{\infty}\left(1-2\mathbf{J}_{a-1}(\sqrt{x})\right)^2f'(x)dx=2\int_{0}^{\infty}
\frac{J_{a-1}(\sqrt{x})}{\sqrt{x}}\left(1-2\mathbf{J}_{a-1}(\sqrt{x})\right)f(x)dx.
$$
It follows that
\bea\label{tr11}
\mathrm{Tr}\:T_{\mathrm{JSE}}&=&\int_{0}^{\infty}K_{\mathrm{Bessel}}^{(a-1)}(x,x)f(x)dx\no\\
&+&\frac{1}{4}\int_{0}^{\infty}L^{(a-1)}(x,x)f'(x)dx+O(N^{-1}).
\eea
Similarly, we obtain from (\ref{trjse2}) by changing $f(x)$ to $f(8N^2(1-x))$ that $\mathrm{Tr}\:T_{\mathrm{JSE}}^{2}$ equals
\bs
\bea\label{tr22}
&&\int_{0}^{\infty}\int_{0}^{\infty}\big(K_{\mathrm{Bessel}}^{(a-1)}(x,y)\big)^2f(x)f(y)dxdy\no\\
&&+\frac{1}{2}\int_{0}^{\infty}\int_{0}^{\infty}\sqrt{\frac{x}{y}}\:K_{\mathrm{Bessel}}^{(a-1)}(x,y)L^{(a-1)}(x,y)f'(x)f(y)dxdy\nonumber\\
&&+\frac{1}{4}\int_{0}^{\infty}\int_{0}^{\infty}\frac{J_{a-1}(\sqrt{x})}{\sqrt{y}}K_{\mathrm{Bessel}}^{(a-1)}(x,y)\left(1-2\mathbf{J}_{a-1}(\sqrt{x})\right)
f(x)f(y)dxdy\nonumber\\
&&+\frac{1}{16}\int_{0}^{\infty}\int_{0}^{\infty}L^{(a-1)}(x,y)L^{(a-1)}(y,x)f'(x)f'(y)dxdy\nonumber\\
&&-\frac{1}{16}\int_{0}^{\infty}\int_{0}^{\infty}\frac{J_{a-1}(\sqrt{x})}{\sqrt{x}}\left(1-2\mathbf{J}_{a-1}(\sqrt{y})\right)
\left(L^{(a-1)}(x,y)-L^{(a-1)}(y,x)\right)f(x)f'(y)dxdy\nonumber\\
&&+\frac{1}{32}\int_{0}^{\infty}\int_{0}^{\infty}\left(1-2\mathbf{J}_{a-1}(\sqrt{x})\right)\left(1-2\mathbf{J}_{a-1}(\sqrt{y})\right)L^{(a-1)}(x,y)f'(x)f'(y)dxdy\no\\
&&+O(N^{-1}),
\eea
\es%
where we have used integration by parts to simplify the results.

Similarly as in Section 3.2, by substituting (\ref{fx}) into (\ref{tr11}) and (\ref{tr22}) and using the fact that $\log G_{N}^{(4)}(f)=\frac{1}{2}\log\det(I+T_{\mathrm{JSE}})$, we heuristically obtain the following theorem.
\begin{theorem}
Denoting by $\tilde{\mu}_{N}^{(\mathrm{JSE})}$ and $\tilde{\mathcal{V}}_{N}^{(\mathrm{JSE})}$ the mean and variance of the linear statistics
$\sum_{j=1}^{N}F(8N^2(1-x_j))$, we have as $N\rightarrow\infty$,
$$
\tilde{\mu}_{N}^{(\mathrm{JSE})}=\frac{1}{2}\tilde{\mu}_{N}^{(\mathrm{JUE},a-1)}+\frac{1}{8}\int_{0}^{\infty}L^{(a-1)}(x,x)F'(x)dx+O(N^{-1}),
$$
\bs
\bea
\tilde{\mathcal{V}}_{N}^{(\mathrm{JSE})}
&=&\frac{1}{2}\tilde{\mathcal{V}}_{N}^{(\mathrm{JUE},a-1)}-\frac{1}{4}\int_{0}^{\infty}\int_{0}^{\infty}\sqrt{\frac{x}{y}}\:K_{\mathrm{Bessel}}^{(a-1)}(x,y)L^{(a-1)}(x,y)
F'(x)F(y)dxdy\nonumber\\
&-&\frac{1}{8}\int_{0}^{\infty}\int_{0}^{\infty}\frac{J_{a-1}(\sqrt{x})}{\sqrt{y}}K_{\mathrm{Bessel}}^{(a-1)}(x,y)\left(1-2\mathbf{J}_{a-1}(\sqrt{x})\right)
F(x)F(y)dxdy\nonumber\\
&+&\frac{1}{32}\int_{0}^{\infty}\int_{0}^{\infty}\frac{J_{a-1}(\sqrt{x})}{\sqrt{x}}
\left(1-2\mathbf{J}_{a-1}(\sqrt{y})\right)\left(L^{(a-1)}(x,y)-L^{(a-1)}(y,x)\right)\no\\
&&F(x)F'(y)dxdy-\frac{1}{32}\int_{0}^{\infty}\int_{0}^{\infty}L^{(a-1)}(x,y)L^{(a-1)}(y,x)F'(x)F'(y)dxdy\nonumber\\
&-&\frac{1}{64}\int_{0}^{\infty}\int_{0}^{\infty}\left(1-2\mathbf{J}_{a-1}(\sqrt{x})\right)\left(1-2\mathbf{J}_{a-1}(\sqrt{y})\right)L^{(a-1)}(x,y)\no\\
&&F'(x)F'(y)dxdy+\frac{1}{4}\int_{0}^{\infty}L^{(a-1)}(x,x)F(x)F'(x)dx+O(N^{-1}),\nonumber
\eea
\es
where $\tilde{\mu}_{N}^{(\mathrm{JUE},a-1)}$ and $\tilde{\mathcal{V}}_{N}^{(\mathrm{JUE},a-1)}$ are given by (\ref{juem2}) and (\ref{juev2}) with $a$ replaced by $a-1$, and $\mathbf{J}_{a-1}(x)$ and $L^{(a-1)}(x,y)$ are defined by (\ref{jx}) and (\ref{lxy}) respectively.
\end{theorem}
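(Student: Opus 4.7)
The plan is to read off the mean and variance by expanding $\log\det(I+T_{\mathrm{JSE}})$ in powers of $\lambda$ via (\ref{logdet4}), using the asymptotic expressions (\ref{tr11}) and (\ref{tr22}) that have already been established, then dividing by two via the identity $\log G_N^{(4)}(f)=\tfrac{1}{2}\log\det(I+T_{\mathrm{JSE}})$ coming from (\ref{gn4d}). Substituting the expansion $f(x)=-\lambda F(x)+\tfrac{\lambda^2}{2}F^2(x)-\cdots$ from (\ref{fx}) and its derivative $f'(x)=-\lambda F'(x)+\lambda^2 F(x)F'(x)-\cdots$, the mean is the negative of the coefficient of $\lambda$, and the variance is the coefficient of $\lambda^2/2$, with only the traces $\mathrm{Tr}\,T_{\mathrm{JSE}}$ and $\mathrm{Tr}\,T_{\mathrm{JSE}}^2$ contributing at these orders (higher traces $\mathrm{Tr}\,T_{\mathrm{JSE}}^k$ for $k\ge 3$ are $O(\lambda^3)$).

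For the mean, I would extract the $\lambda^1$ coefficient from (\ref{tr11}) only, since $\mathrm{Tr}\,T_{\mathrm{JSE}}^2$ is quadratic in $f$ and $f'$. This yields $\tilde{\mu}_N^{(\mathrm{JSE})}=\tfrac{1}{2}\bigl[\int_0^\infty K_{\mathrm{Bessel}}^{(a-1)}(x,x)F(x)dx+\tfrac{1}{4}\int_0^\infty L^{(a-1)}(x,x)F'(x)dx\bigr]+O(N^{-1})$, and I would then recognize the first integral as $\tilde{\mu}_N^{(\mathrm{JUE},a-1)}$ by comparing with (\ref{juem2}) under $a\mapsto a-1$. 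For the variance, I would take the $\lambda^2$ part of (\ref{tr11}), namely $\tfrac{\lambda^2}{2}\int K_{\mathrm{Bessel}}^{(a-1)}(x,x)F^2(x)dx+\tfrac{\lambda^2}{4}\int L^{(a-1)}(x,x)F(x)F'(x)dx$, and combine it with $-\tfrac{1}{2}$ times (\ref{tr22}) specialized to $f=-\lambda F$, $f'=-\lambda F'$ (which picks out $\lambda^2$ automatically). The two terms $\int K^{(a-1)}_{\mathrm{Bessel}}(x,x)F^2(x)dx$ and $-\int\!\int \bigl(K^{(a-1)}_{\mathrm{Bessel}}(x,y)\bigr)^2F(x)F(y)dxdy$ assemble, after multiplication by the overall factor $\tfrac{1}{2}$, into $\tfrac{1}{2}\tilde{\mathcal{V}}_N^{(\mathrm{JUE},a-1)}$ via (\ref{juev2}); the five remaining integrals from (\ref{tr22}) acquire prefactors $-\tfrac{1}{4},-\tfrac{1}{8},-\tfrac{1}{32},\tfrac{1}{32},-\tfrac{1}{64}$ respectively (each obtained from the corresponding coefficient in (\ref{tr22}) times $\tfrac{1}{2}\cdot(-\tfrac{1}{2})$, with sign flips from the two negative entries), and the extra contribution $\tfrac{1}{4}\int_0^\infty L^{(a-1)}(x,x)F(x)F'(x)dx$ survives from $\mathrm{Tr}\,T_{\mathrm{JSE}}$.

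The main obstacle is not any single analytic estimate but rather the bookkeeping: there are seven distinct integrals in the variance formula, each arising with its own numerical prefactor, and the signs must be tracked correctly through the chain $f\leadsto\mathrm{Tr}\,T_{\mathrm{JSE}}^k\leadsto\log\det\leadsto\tfrac{1}{2}\log\det$. A subtle point is that the asymmetric kernels $L^{(a-1)}(x,y)$ appear in combinations $L^{(a-1)}(x,y)-L^{(a-1)}(y,x)$ and $L^{(a-1)}(x,y)L^{(a-1)}(y,x)$ inherited directly from (\ref{tr22}), and these should be left in that form rather than symmetrized. As the paper already notes, the whole procedure is heuristic in the sense that error terms from the $O(N^{-1})$ pieces in Theorems \ref{jse3} and \ref{jse4} and from the truncation of the trace-log series are not tracked uniformly, so the theorem should be understood with that caveat.
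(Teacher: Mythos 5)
Your proposal follows exactly the paper's route: substitute the expansion (\ref{fx}) into (\ref{tr11}) and (\ref{tr22}), truncate the trace--log series (\ref{logdet4}) at the quadratic order, and halve via $\log G_{N}^{(4)}(f)=\tfrac{1}{2}\log\det(I+T_{\mathrm{JSE}})$; all seven prefactors you list agree with the theorem. (One trivial remark: your parenthetical rule ``coefficient times $\tfrac12\cdot(-\tfrac12)$'' gives the coefficient of $\lambda^2$ in $\log G_N^{(4)}$ rather than the variance itself, which is twice that; the correct multiplier is $-\tfrac12$, and your stated final prefactors are consistent with this, so nothing is actually wrong.)
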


\section{Jacobi Orthogonal Ensemble (JOE)}
\subsection{Finite $N$ Case for the MGF in JOE}
In the JOE case, we take the weight $w(x)$ to be $w(x)=(1-x)^{a/2}(1+x)^{b/2},\;\;x\in [-1,1],\;a, b>-2$ for convenience.
We assume that $N$ is even. The authors \cite{Min201601} expressed $G_{N}^{(1)}(f)$ as a Fredholm determinant based on \cite{Dieng,Tracy1998}. Let
$$
\psi_{j}^{(1)}(x):=\pi_{j}(x)w(x),\;\;j=0,1,2,\ldots,
$$
where $\pi_{j}(x)$ is an arbitrary polynomial of degree $j$, and
$$
M^{(1)}:=\left(\int_{-1}^{1}\psi_{j}^{(1)}(x)\varepsilon\psi_{k}^{(1)}(x)dx\right)_{j,k=0}^{N-1}
$$
with its inverse denoted by
$$
\left(M^{(1)}\right)^{-1}=:(\nu_{jk})_{j,k=0}^{N-1}.
$$
It was shown in \cite{Min201601} that
\be\label{gn1}
\left[G_{N}^{(1)}(f)\right]^{2}
=\det\left(I+K_{N}^{(1)}(f^{2}+2f)-K_{N}^{(1)}\varepsilon f'-K_{N}^{(1)}f\varepsilon f'\right),
\ee
where $K_{N}^{(1)}$ is the integral operator with kernel
\be\label{kn1}
K_{N}^{(1)}(x,y):=\sum_{j,k=0}^{N-1}\nu_{jk}\varepsilon\psi_{j}^{(1)}(x)\psi_{k}^{(1)}(y).
\ee
We also require that $f\in C^{1}[-1,1]$ and vanishes at the endpoints $\pm 1$.

Similarly as the discussions in Section 3.1,
we can choose a special $\psi_{j}^{(1)}$ to simplify $M^{(1)}$ as the direct sum of $N/2$ copies of
\bs
$
\begin{pmatrix}
0&1\\
-1&0
\end{pmatrix}
$\es. Let
$$
\psi_{2j+1}^{(1)}(x)=\frac{d}{dx}\left[(1-x^2)\varphi_{2j}^{(1)}(x)\right],\qquad \psi_{2j}^{(1)}(x)=\varphi_{2j}^{(1)}(x),\;\;j=0,1,2,\ldots,
$$
where $\varphi_{j}^{(1)}(x)$ is given by
$$
\varphi_{j}^{(1)}(x)=\frac{P_{j}^{(a+1,b+1)}(x)}{\sqrt{h_{j}^{(a+1,b+1)}}}(1-x)^{a/2}(1+x)^{b/2},
$$
and $P_{j}^{(a+1,b+1)}(x), j=0,1,\ldots$ are the Jacobi polynomials with the orthogonality condition
$$
\int_{-1}^{1}P_{j}^{(a+1,b+1)}(x)P_{k}^{(a+1,b+1)}(x)(1-x)^{a+1}(1+x)^{b+1} dx=h_{j}^{(a+1,b+1)}\delta_{jk}.
$$
It is easy to see that $\psi_{j}^{(1)}(x)$ is equal to $(1-x)^{a/2}(1+x)^{b/2}$ multiplied by a polynomial of degree $j$. Moreover, $M^{(1)}$ is computed to be
$$
M^{(1)}=\begin{pmatrix}
0&1&0&0&\cdots&0&0\\
-1&0&0&0&\cdots&0&0\\
0&0&0&1&\cdots&0&0\\
0&0&-1&0&\cdots&0&0\\
\vdots&\vdots&\vdots&\vdots&&\vdots&\vdots\\
0&0&0&0&\cdots&0&1\\
0&0&0&0&\cdots&-1&0
\end{pmatrix}_{N\times N}.
$$
It follows that $\left(M^{(1)}\right)^{-1}=-M^{(1)}$, so $\nu_{2j,2j+1}=-1,\; \nu_{2j+1,2j}=1$ and $\nu_{jk}=0$ for other cases.

\begin{lemma}
We have
\be\label{kn1a}
K_N^{(1)}(x,y)=S_N^{(1)}(x,y)+C_N^{(1)}\varepsilon\varphi_{N}^{(1)}(x)
\varphi_{N-1}^{(1)}(y),
\ee
where
$$
C_N^{(1)}=\sqrt{\frac{N(N+a+1)(N+b+1)(N+a+b+2)}{(2N+a+b+1)(2N+a+b+3)}}
$$
and
$$
S_N^{(1)}(x,y)=\sum_{j=0}^{N-1}(1-x^2)\varphi_{j}^{(1)}(x)\varphi_{j}^{(1)}(y).
$$
\end{lemma}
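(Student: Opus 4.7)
The plan is to mimic the proof already given for the analogous JSE formula (\ref{kn4a}), adapted to the JOE choices of $\psi_j^{(1)}$ and $\nu_{jk}$. Starting from the definition (\ref{kn1}), the only nonzero entries of $(\nu_{jk})$ are $\nu_{2j,2j+1}=-1$ and $\nu_{2j+1,2j}=1$, so
$$
K_N^{(1)}(x,y)=\sum_{j=0}^{N/2-1}\varepsilon\psi_{2j+1}^{(1)}(x)\psi_{2j}^{(1)}(y)-\sum_{j=0}^{N/2-1}\varepsilon\psi_{2j}^{(1)}(x)\psi_{2j+1}^{(1)}(y).
$$
Substituting $\psi_{2j}^{(1)}=\varphi_{2j}^{(1)}$ and $\psi_{2j+1}^{(1)}=[(1-x^2)\varphi_{2j}^{(1)}]'$ and invoking Lemma \ref{lem} (note $(1-x^2)\varphi_{2j}^{(1)}(x)$ vanishes at $x=\pm1$) yields $\varepsilon\psi_{2j+1}^{(1)}(x)=(1-x^2)\varphi_{2j}^{(1)}(x)$, so that the first sum already contributes the even-indexed part of $S_N^{(1)}(x,y)$.

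Next, I would establish, from the recurrences (\ref{rf1}) and (\ref{rf2}) with $(a,b)$ replaced by $(a+1,b+1)$ (which encode the relevant raising/lowering relations for the Jacobi weight $(1-x)^{a+1}(1+x)^{b+1}$ underlying $\varphi_j^{(1)}$), a three-term identity of the shape
$$
\bigl[(1-y^2)\varphi_n^{(1)}(y)\bigr]' = \widehat{C}_{n-1}\varphi_{n-1}^{(1)}(y)-\widehat{C}_{n}\varphi_{n+1}^{(1)}(y),
$$
with explicit coefficients $\widehat{C}_n$ that are the JOE analogs of the $C_j^{(4)}$ in the JSE lemma. Applying this with $n=2j$ inside the second sum gives
$$
-\sum_{j=0}^{N/2-1}\varepsilon\varphi_{2j}^{(1)}(x)\Bigl[\widehat{C}_{2j-1}\varphi_{2j-1}^{(1)}(y)-\widehat{C}_{2j}\varphi_{2j+1}^{(1)}(y)\Bigr],
$$
which I would then regroup by shifting the summation index so that each $\varphi_{2j+1}^{(1)}(y)$ is paired with the combination $\widehat{C}_{2j}\varepsilon\varphi_{2j}^{(1)}(x)-\widehat{C}_{2j+1}\varepsilon\varphi_{2j+2}^{(1)}(x)$.

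Applying the three-term identity with $n=2j+1$ and then hitting both sides with $\varepsilon$ (again legitimate by Lemma \ref{lem}) produces the companion identity
$$
(1-x^2)\varphi_{2j+1}^{(1)}(x)=\widehat{C}_{2j}\varepsilon\varphi_{2j}^{(1)}(x)-\widehat{C}_{2j+1}\varepsilon\varphi_{2j+2}^{(1)}(x),
$$
which collapses the reindexed sum into $\sum_{j=0}^{N/2-1}(1-x^2)\varphi_{2j+1}^{(1)}(x)\varphi_{2j+1}^{(1)}(y)$ — the odd-indexed part of $S_N^{(1)}(x,y)$ — plus a single uncanceled endpoint term from the telescoping. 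That residual boundary term is exactly $\widehat{C}_{N-1}\varepsilon\varphi_N^{(1)}(x)\varphi_{N-1}^{(1)}(y)$, and combining the two contributions gives the claimed formula (\ref{kn1a}).

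The main obstacle is the bookkeeping in the third step: one has to compute the coefficients $\widehat{C}_n$ from (\ref{rf1})--(\ref{rf2}) and verify that $\widehat{C}_{N-1}$ simplifies to the stated
$$
C_N^{(1)}=\sqrt{\frac{N(N+a+1)(N+b+1)(N+a+b+2)}{(2N+a+b+1)(2N+a+b+3)}},
$$
while also checking that the $j=0$ endpoint of the telescoping contributes nothing (so that only the upper boundary survives). Once the algebra of these constants is carried out carefully, the rest of the argument is a direct parallel of the JSE calculation.
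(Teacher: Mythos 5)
Your proposal follows essentially the same route as the paper's proof: expand $K_N^{(1)}$ using the block structure of $(\nu_{jk})$, apply Lemma \ref{lem} to identify the even-indexed part of $S_N^{(1)}$, use the Jacobi recurrences (\ref{rf1})--(\ref{rf2}) to get the three-term identity for $[(1-y^2)\varphi_{2j}^{(1)}(y)]'$, reindex, and collapse via the companion $\varepsilon$-identity to produce the odd-indexed part plus the single boundary term. Your $\widehat{C}_n$ is the paper's $C_{n+1}^{(1)}$ (whose $j$-factor indeed kills the lower endpoint), so the bookkeeping you flag works out exactly as you anticipate.
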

\begin{proof}
According to (\ref{kn1}), we find that $K_{N}^{(1)}(x,y)$ equals
\bea\label{kn1e}
&&\sum_{j=0}^{\frac{N}{2}-1}\varepsilon\psi_{2j+1}^{(1)}(x)\psi_{2j}^{(1)}(y)-\sum_{j=0}^{\frac{N}{2}-1}\varepsilon\psi_{2j}^{(1)}(x)\psi_{2j+1}^{(1)}(y)\nonumber\\
&&=\sum_{j=0}^{\frac{N}{2}-1}(1-x^2)\varphi_{2j}^{(1)}(x)\varphi_{2j}^{(1)}(y)-\sum_{j=0}^{\frac{N}{2}-1}\varepsilon\varphi_{2j}^{(1)}(x)\left[(1-y^2)\varphi_{2j}^{(1)}(y)\right]'.
\eea
In view of the recurrence formulas for the Jacobi polynomials (\ref{rf1}) and (\ref{rf2}), we find
$$
\left[(1-y^2)\varphi_{2j}^{(1)}(y)\right]'=C_{2j}^{(1)}\varphi_{2j-1}^{(1)}(y)-C_{2j+1}^{(1)}\varphi_{2j+1}^{(1)}(y),
$$
where
$$
C_j^{(1)}:=\sqrt{\frac{j(j+a+1)(j+b+1)(j+a+b+2)}{(2j+a+b+1)(2j+a+b+3)}}.
$$
Then (\ref{kn1e}) becomes
\bea\label{com11}
K_{N}^{(1)}(x,y)&=&\sum_{j=0}^{\frac{N}{2}-1}(1-x^2)\varphi_{2j}^{(1)}(x)\varphi_{2j}^{(1)}(y)+C_{N}^{(1)}\varepsilon\varphi_{N}^{(1)}(x)\varphi_{N-1}^{(1)}(y)\no\\
&+&\sum_{j=1}^{\frac{N}{2}}\left[C_{2j-1}^{(1)}\varepsilon\varphi_{2j-2}^{(1)}(x)
-C_{2j}^{(1)}\varepsilon\varphi_{2j}^{(1)}(x)\right]\varphi_{2j-1}^{(1)}(y).
\eea
Using (\ref{rf1}) and (\ref{rf2}) again, we have
$$
\left[(1-x^2)\varphi_{2j-1}^{(1)}(x)\right]'=C_{2j-1}^{(1)}\varphi_{2j-2}^{(1)}(x)-C_{2j}^{(1)}\varphi_{2j}^{(1)}(x).
$$
It follows from Lemma \ref{lem} that
\be\label{com21}
(1-x^2)\varphi_{2j-1}^{(1)}(x)=\varepsilon\left[(1-x^2)\varphi_{2j-1}^{(1)}(x)\right]'=C_{2j-1}^{(1)}\varepsilon\varphi_{2j-2}^{(1)}(x)-C_{2j}^{(1)}\varepsilon\varphi_{2j}^{(1)}(x).
\ee
The combination of (\ref{com11}) and (\ref{com21}) produces
\bea
K_{N}^{(1)}(x,y)&=&\sum_{j=0}^{\frac{N}{2}-1}(1-x^2)\varphi_{2j}^{(1)}(x)\varphi_{2j}^{(1)}(y)
+\sum_{j=1}^{\frac{N}{2}}(1-x^2)\varphi_{2j-1}^{(1)}(x)\varphi_{2j-1}^{(1)}(y)\no\\
&+&C_{N}^{(1)}\varepsilon\varphi_{N}^{(1)}(x)\varphi_{N-1}^{(1)}(y)\no\\
&=&\sum_{j=0}^{N-1}(1-x^2)\varphi_{j}^{(1)}(x)\varphi_{j}^{(1)}(y)+C_{N}^{(1)}\varepsilon\varphi_{N}^{(1)}(x)\varphi_{N-1}^{(1)}(y).\no
\eea
The theorem is then established.
\end{proof}
\begin{theorem}\label{thm}
For the Jacobi orthogonal ensemble, we have
$$
\left[G_{N}^{(1)}(f)\right]^{2}=\det(I+T_{\mathrm{JOE}}),
$$
where
\bea\label{tjoe}
T_{\mathrm{JOE}}:&=&S_{N}^{(1)}(f^2+2f)-S_{N}^{(1)}\varepsilon f'-S_{N}^{(1)}f\varepsilon f'+C_N^{(1)}\left(\varepsilon\varphi_{N}^{(1)}\right)\otimes\varphi_{N-1}^{(1)}(f^2+2f)\nonumber\\
&+&C_N^{(1)}\left(\varepsilon\varphi_{N}^{(1)}\right)\otimes\left(\varepsilon\varphi_{N-1}^{(1)}\right) f'-C_N^{(1)}\left(\varepsilon\varphi_{N}^{(1)}\right)\otimes\varphi_{N-1}^{(1)}f\varepsilon f'.
\eea
\end{theorem}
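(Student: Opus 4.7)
The proof is a direct substitution calculation, entirely parallel to the one-line proof of the analogous JSE theorem. The plan is to insert the rank-one decomposition (\ref{kn1a}) of $K_N^{(1)}$ into the finite-$N$ Fredholm determinant formula (\ref{gn1}) and simplify each resulting piece by exploiting the skew-symmetry $\varepsilon^{t} = -\varepsilon$.

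First I write out the operator inside $\det(I+\,\cdot\,)$ in (\ref{gn1}) as
\begin{equation*}
K_{N}^{(1)}(f^{2}+2f) - K_{N}^{(1)}\varepsilon f' - K_{N}^{(1)} f\varepsilon f',
\end{equation*}
and then substitute $K_{N}^{(1)} = S_{N}^{(1)} + C_{N}^{(1)}(\varepsilon\varphi_{N}^{(1)})\otimes\varphi_{N-1}^{(1)}$ in each of the three places where $K_{N}^{(1)}$ appears. Distributing produces six summands: three of them collect into $S_{N}^{(1)}(f^{2}+2f) - S_{N}^{(1)}\varepsilon f' - S_{N}^{(1)}f\varepsilon f'$, which already matches the first three terms of $T_{\mathrm{JOE}}$ in (\ref{tjoe}), while the other three are rank-one operators built from $(\varepsilon\varphi_{N}^{(1)})\otimes\varphi_{N-1}^{(1)}$ composed on the right with the operators $(f^{2}+2f)$, $\varepsilon f'$, and $f\varepsilon f'$ respectively.

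The main (and only nontrivial) step is handling the middle rank-one piece. Applying the identity $(u\otimes v)A = u\otimes (A^{t}v)$ used in the JSE proof with $A = \varepsilon f' = \varepsilon\,M_{f'}$, and using $\varepsilon^{t} = -\varepsilon$ together with $M_{f'}^{t} = M_{f'}$, I obtain
\begin{equation*}
(\varepsilon\varphi_{N}^{(1)})\otimes \varphi_{N-1}^{(1)}\cdot\varepsilon f' = (\varepsilon\varphi_{N}^{(1)})\otimes \bigl(-f'\,\varepsilon\varphi_{N-1}^{(1)}\bigr) = -(\varepsilon\varphi_{N}^{(1)})\otimes(\varepsilon\varphi_{N-1}^{(1)})f',
\end{equation*}
so the overall sign flip converts the $-K_{N}^{(1)}\varepsilon f'$ contribution into the $+C_{N}^{(1)}(\varepsilon\varphi_{N}^{(1)})\otimes(\varepsilon\varphi_{N-1}^{(1)})f'$ term recorded in (\ref{tjoe}). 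The remaining two rank-one summands, coming from $(f^{2}+2f)$ and $f\varepsilon f'$, are simply left in the form produced by substitution, matching the first and the last rank-one terms in (\ref{tjoe}).

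I expect no serious obstacle: the whole argument is formal operator bookkeeping. The only point requiring care is the sign produced by $\varepsilon^{t}=-\varepsilon$ in the $\varepsilon f'$ term, since that is what yields the $\varepsilon\varphi_{N-1}^{(1)}$ appearing in the tensor factor (rather than $\varphi_{N-1}^{(1)}\varepsilon$ surviving as a right composition). Once this sign is correctly tracked, collecting all six pieces gives precisely the operator $T_{\mathrm{JOE}}$ of (\ref{tjoe}), and hence $[G_{N}^{(1)}(f)]^{2} = \det(I + T_{\mathrm{JOE}})$, as claimed.
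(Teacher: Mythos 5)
Your proposal is correct and is exactly the paper's argument: the paper's proof is the one-line "substitute (\ref{kn1a}) into (\ref{gn1})," with the identity $(u\otimes v)A=u\otimes(A^{t}v)$ and $\varepsilon^{t}=-\varepsilon$ (as invoked in the JSE analogue) doing the same sign bookkeeping you spell out for the $\varepsilon f'$ term. You have simply written out the details the paper leaves implicit.
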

\begin{proof}
Substituting (\ref{kn1a}) into (\ref{gn1}), we obtain the desired result.
\end{proof}

The amenable expression of $G_{N}^{(1)}(f)$ in the above theorem will allow us to study its large $N$ asymptotics in the next subsections by using the expansion formula
\be\label{logdet1}
\log\det(I+T_{\mathrm{JOE}})=\mathrm{Tr}\log(I+T_{\mathrm{JOE}})=\mathrm{Tr}\:T_{\mathrm{JOE}}-\frac{1}{2}\mathrm{Tr}\:T_{\mathrm{JOE}}^{2}
+\frac{1}{3}\mathrm{Tr}\:T_{\mathrm{JOE}}^{3}-\cdots.
\ee

\subsection{Scaling in the Bulk of the Spectrum in JOE}
Similarly as Theorem \ref{jse1} and \ref{jse2}, we have the following two theorems.
\begin{theorem}\label{joe1}
For $x, y\in\mathbb{R}$, we have as $N\rightarrow\infty$,
$$
\frac{1}{N}S_{N}^{(1)}\left(\frac{x}{N},\frac{y}{N}\right)
=K_{\mathrm{sine}}(x,y)+O(N^{-1}),
$$
uniformly for $x$ and $y$ in compact subsets of $\mathbb{R}$.
\end{theorem}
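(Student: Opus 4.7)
The plan is to reduce Theorem \ref{joe1} directly to Theorem \ref{jue1} by identifying $S_{N}^{(1)}$ with a JUE-type Christoffel--Darboux kernel (for shifted parameters), up to a trivial prefactor.

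First I would introduce
$$
\tilde{\varphi}_{j}(x):=\frac{P_{j}^{(a+1,b+1)}(x)}{\sqrt{h_{j}^{(a+1,b+1)}}}(1-x)^{(a+1)/2}(1+x)^{(b+1)/2},
$$
which is precisely the orthonormal system on $L^{2}[-1,1]$ underlying Theorem \ref{jue1} after the replacement $(a,b)\mapsto(a+1,b+1)$. Comparing with the definition of $\varphi_{j}^{(1)}$ in Section 4.1, one sees $\varphi_{j}^{(1)}(x)=(1-x^{2})^{-1/2}\tilde{\varphi}_{j}(x)$, and hence
$$
S_{N}^{(1)}(x,y)=\sqrt{\frac{1-x^{2}}{1-y^{2}}}\:\tilde{K}_{N}(x,y),\qquad \tilde{K}_{N}(x,y):=\sum_{j=0}^{N-1}\tilde{\varphi}_{j}(x)\tilde{\varphi}_{j}(y).
$$
Thus the problem splits into (i) the bulk asymptotics of $\tilde{K}_{N}$ and (ii) the control of the (harmless) asymmetric prefactor.

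Next I would invoke the proof of Theorem \ref{jue1} with $(a,b)$ replaced by $(a+1,b+1)$. The JOE hypothesis $a,b>-2$ becomes $a+1,b+1>-1$, so every ingredient of the earlier argument---the Christoffel--Darboux formula applied to $P_{n}^{(a+1,b+1)}$, the trigonometric asymptotic (\ref{asy}), and Stirling's formula for the resulting gamma-function ratios---applies verbatim. This yields
$$
\frac{1}{N}\,\tilde{K}_{N}\!\left(\frac{x}{N},\frac{y}{N}\right)=K_{\mathrm{sine}}(x,y)+O(N^{-1}),
$$
uniformly for $(x,y)$ in compact subsets of $\mathbb{R}^{2}$.

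Finally, under the bulk scaling $x\mapsto x/N$, $y\mapsto y/N$, the prefactor satisfies $\sqrt{(1-x^{2}/N^{2})/(1-y^{2}/N^{2})}=1+O(N^{-2})$ uniformly on compact sets, and $K_{\mathrm{sine}}(x,y)$ is bounded there, so multiplying through preserves the error bound and gives the claim. There is no serious obstacle: the argument is essentially a dictionary between $S_{N}^{(1)}$ and a JUE kernel at shifted parameters, and the sine kernel is parameter-free in the bulk. The only mildly delicate point is keeping track of the asymmetric factor $(1-x^{2})$ baked into the definition of $S_{N}^{(1)}$, but under bulk scaling it contributes only at order $O(N^{-2})$ and is safely absorbed into the overall error.
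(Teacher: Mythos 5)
Your proposal is correct, and it takes a slightly different route from the one the paper implies. The paper proves Theorem \ref{jue1} by combining the Christoffel--Darboux formula with the trigonometric asymptotic (\ref{asy}) and Stirling's formula, and then simply asserts that Theorem \ref{joe1} follows ``similarly,'' i.e.\ by repeating that computation for the kernel $S_{N}^{(1)}$ built from the $(a+1,b+1)$ Jacobi polynomials. You instead observe the exact identity $S_{N}^{(1)}(x,y)=\sqrt{(1-x^{2})/(1-y^{2})}\,\tilde{K}_{N}(x,y)$, where $\tilde{K}_{N}$ is literally the JUE kernel (\ref{kn2}) at shifted parameters, and then cite Theorem \ref{jue1} with $(a,b)\mapsto(a+1,b+1)$ --- which is legitimate since the JOE hypothesis $a,b>-2$ puts the shifted parameters in the admissible range $>-1$, and the sine kernel limit is parameter-free. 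The asymmetric prefactor is indeed $1+O(N^{-2})$ under the bulk scaling and is absorbed into the $O(N^{-1})$ error because $K_{\mathrm{sine}}$ is bounded on compacts. What your reduction buys is economy and transparency: no re-derivation of Christoffel--Darboux asymptotics is needed, and the structural reason the JOE bulk kernel matches the JUE one is made explicit. What the paper's (implied) direct computation buys is uniformity of method with the companion statements (Theorem \ref{joe2} and the edge results), which do not reduce to the unitary case in the same way. Both arguments are sound.
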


\begin{theorem}\label{joe2}
For $x\in\mathbb{R}$, we have as $N\rightarrow\infty$,
$$
\varphi_{N}^{(1)}\left(\frac{x}{N}\right)
=-\sqrt{\frac{2}{\pi}}\sin\left[\frac{1}{4}\left(\pi+2\pi a-2(2N+3+a+b)\arccos\frac{x}{N}\right)\right]+O(N^{-1}),
$$
$$
\varphi_{N-1}^{(1)}\left(\frac{x}{N}\right)
=-\sqrt{\frac{2}{\pi}}\sin\left[\frac{1}{4}\left(\pi+2\pi a-2(2N+1+a+b)\arccos\frac{x}{N}\right)\right]+O(N^{-1}),
$$
\bea
\varepsilon\varphi_{N}^{(1)}\left(\frac{x}{N}\right)
&=&\frac{1}{N\sqrt{2\pi}}\bigg\{\sin\left[\frac{1}{4}\left(\pi+2\pi a-2(2N+1+a+b)\arccos\frac{x}{N}\right)\right]\nonumber\\
&-&\sin\left[\frac{1}{4}\left(\pi+2\pi a-2(2N+5+a+b)\arccos\frac{x}{N}\right)\right]\bigg\}+O(N^{-2}),\nonumber
\eea
\bea
\varepsilon\varphi_{N-1}^{(1)}\left(\frac{x}{N}\right)
&=&\frac{1}{N\sqrt{2\pi}}\bigg\{\sin\left[\frac{1}{4}\left(\pi+2\pi a-2(2N-1+a+b)\arccos\frac{x}{N}\right)\right]\nonumber\\
&-&\sin\left[\frac{1}{4}\left(\pi+2\pi a-2(2N+3+a+b)\arccos\frac{x}{N}\right)\right]\bigg\}+O(N^{-2}).\nonumber
\eea
The error terms are uniform for $x$ in compact subsets of $\mathbb{R}$.
\end{theorem}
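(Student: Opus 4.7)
The plan is to follow the strategy of Theorem \ref{jse2}: substitute the large-$n$ asymptotic (\ref{asy}) for $P_j^{(a+1,b+1)}(\cos\theta)$ into the definition of $\varphi_j^{(1)}$, normalize via Stirling's estimate $h_j^{(a+1,b+1)} = 2^{a+b+2}/j + O(j^{-2})$, and simplify.

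For $\varphi_N^{(1)}(x/N)$ and $\varphi_{N-1}^{(1)}(x/N)$: after substitution the amplitude $(2\sqrt{\pi})^{-1}(\sin(\theta/2))^{-3/2}(\cos(\theta/2))^{-3/2}$ tends to $\sqrt{2/\pi}$ as $\cos\theta = x/N \to 0$. The cosine's phase $(j+(a+b+3)/2)\theta - \pi(a+3/2)/2$ differs by $\pi/2$ from the negative of the theorem's argument $\tfrac{1}{4}(\pi + 2\pi a - 2(2j+a+b+3)\theta)$, so the identity $\cos(B+\pi/2) = -\sin B$ rewrites the cosine as $-\sin$ of the stated argument. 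Taking $j = N$ yields the constant $2N+3+a+b$ and $j = N-1$ yields $2N+1+a+b$.

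For $\varepsilon\varphi_N^{(1)}(x/N)$ and $\varepsilon\varphi_{N-1}^{(1)}(x/N)$: I would use the parameter-shift identity $P_j^{(a+1,b+1)}(y) = \frac{2}{j+a+b+2}\frac{d}{dy}P_{j+1}^{(a,b)}(y)$ and perform one integration by parts. This gives (modulo boundary terms at $y = \pm 1$ that vanish for $a, b > 0$ and cancel compensating divergences otherwise)
\[
\varepsilon\varphi_j^{(1)}(x/N) = \frac{2(1-x/N)^{a/2}(1+x/N)^{b/2}P_{j+1}^{(a,b)}(x/N)}{(j+a+b+2)\sqrt{h_j^{(a+1,b+1)}}} + O(N^{-2}),
\]
the correction integral being $O(N^{-2})$ by standard oscillatory-integral estimates. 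Inserting (\ref{asy}) for $P_{j+1}^{(a,b)}$ together with Stirling, the right-hand side reduces to $\frac{1}{N}\sqrt{2/\pi}\cos[\mathrm{arg}_{2j+a+b+3}] + O(N^{-2})$, where $\mathrm{arg}_c := \frac{1}{4}(\pi + 2\pi a - 2c\arccos(x/N))$. Applying the sum-to-product identity $\sin A - \sin B = 2\cos(\frac{A+B}{2})\sin(\frac{A-B}{2})$ with $A = \mathrm{arg}_{2j+a+b+1}$ and $B = \mathrm{arg}_{2j+a+b+5}$ (so that $(A+B)/2 = \mathrm{arg}_{2j+a+b+3}$ and $(A-B)/2 = \theta$) converts the cosine into $\frac{1}{2\sin\theta}[\sin A - \sin B]$; since $\sin\theta = 1 + O(N^{-2})$, the prefactor becomes $\frac{1}{N\sqrt{2\pi}}$, yielding the claimed formula for both $j = N$ (constants $2N+1+a+b$ and $2N+5+a+b$) and $j = N-1$ (constants $2N-1+a+b$ and $2N+3+a+b$).

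The main technical obstacle is controlling the boundary contributions at $y = \pm 1$ when $a$ or $b$ lies in $(-2, 0)$, where the factor $(1\mp y)^{a/2}$ diverges while the IBP remainder compensates. My plan is to truncate the integration at $y = \pm(1-\delta)$ with $\delta = \delta(N) \to 0$ slowly: on the bulk $[-1+\delta, 1-\delta]$ the IBP is valid and the boundary contributions at $y = \pm(1-\delta)$ either cancel between $\int_{-1}^{x/N}$ and $\int_{x/N}^{1}$ or contribute $O(N^{-2})$; on each endpoint window the Hilb-type formula (\ref{asy2}), combined with the integrability of $(1\mp y)^{a/2}(1\pm y)^{b/2}$ for $a, b > -2$, yields a contribution of $O(N^{-2})$. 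These errors are absorbed in the theorem's stated $O(N^{-2})$.
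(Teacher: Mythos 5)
Your treatment of the first two formulas is correct and is essentially the paper's own route (the paper gives no details beyond ``use (\ref{asy}) and compute''): inserting (\ref{asy}) with parameters $(a+1,b+1)$, using $h_j^{(a+1,b+1)}\sim 2^{a+b+2}/j$, your amplitude $\frac{1}{2\sqrt{\pi}}(\sin\frac{\theta}{2}\cos\frac{\theta}{2})^{-3/2}\to\sqrt{2/\pi}$ and the phase shift $\cos(B+\pi/2)=-\sin B$ both check out, and the constants $2N+3+a+b$, $2N+1+a+b$ come out right.

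The gap is in the $\varepsilon\varphi$ part, precisely at the step where you declare the integration-by-parts remainder to be $O(N^{-2})$ ``by standard oscillatory-integral estimates.'' Writing $w(y)=(1-y)^{a/2}(1+y)^{b/2}$, the remainder after your IBP is $-\frac{2}{(j+a+b+2)\sqrt{h_j}}\,\varepsilon\bigl(P_{j+1}^{(a,b)}w'\bigr)$ with $w'=w\bigl[\frac{b}{2(1+y)}-\frac{a}{2(1-y)}\bigr]$ (take $a,b>0$ so the boundary terms vanish). The bulk of this integral is indeed negligible by oscillation, but the two hard-edge neighbourhoods are not: near $y=1$ Hilb's formula gives $P_{j+1}^{(a,b)}(\cos\theta)w(\cos\theta)\approx 2^{(a+b)/2}J_a(M\theta)$ while $dy/(1-y)\approx 2\,d\theta/\theta$, so the edge contributes $-2^{(a+b)/2}a\int_0^\infty J_a(s)s^{-1}ds=-2^{(a+b)/2}$, an $O(1)$ quantity; the edge $y=-1$ likewise contributes $(-1)^{j+1}2^{(a+b)/2}$. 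In $\varepsilon(P_{j+1}^{(a,b)}w')$ these cancel when $j+1$ is odd (the case $j=N$) but add when $j+1$ is even (the case $j=N-1$), making the remainder $O(N^{-1/2})$ --- larger than the $O(N^{-1})$ main term, not $O(N^{-2})$. Your $\delta(N)$-truncation cannot remove this, because the offending quantity is a convergent integral accumulated at the edge, not a divergence. The Legendre case $a=b=0$ makes the point exactly: there $\varphi_{N-1}^{(1)}=\frac{2}{(N+1)\sqrt{h_{N-1}^{(1,1)}}}P_N'$, hence $\varepsilon\varphi_{N-1}^{(1)}(x)=\sqrt{\tfrac{2N+1}{2N(N+1)}}\bigl(P_N(x)-\tfrac{P_N(1)+P_N(-1)}{2}\bigr)=\sqrt{\tfrac{2N+1}{2N(N+1)}}\bigl(P_N(x)-1\bigr)$ for $N$ even, whose value at $x=\tilde{x}/N$ is $-N^{-1/2}(1+o(1))$ plus the $O(N^{-1})$ oscillatory term that your formula (and the theorem's fourth display) predicts. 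So an honest error analysis of your own argument produces an extra non-oscillatory term of order $N^{-1/2}$ in $\varepsilon\varphi_{N-1}^{(1)}(x/N)$; you must either prove this constant is absent (it is not, for odd index) or carry it explicitly --- and this also flags a discrepancy with the statement you are asked to prove, which admits no such term.
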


Using Theorem \ref{joe1} and \ref{joe2} to compute $\mathrm{Tr}\:T_{\mathrm{JOE}}$ and $\mathrm{Tr}\:T_{\mathrm{JOE}}^{2}$ from (\ref{tjoe}), and changing $f(x)$ to $f(Nx)$, we find that $\mathrm{Tr}\:T_{\mathrm{JOE}}$ and $\mathrm{Tr}\:T_{\mathrm{JOE}}^{2}$ equal
\bea
&&\int_{-\infty}^{\infty}K_{\mathrm{sine}}(x,x)\left(f^2(x)+2f(x)\right)dx\no\\
&-&\frac{1}{2}\int_{-\infty}^{\infty}
\left[\int_{-\infty}^{\infty}\left(1-2\chi_{(-\infty,x)}(y)\right)K_{\mathrm{sine}}(x,y)f(y)dy\right]f'(x)dx+O(N^{-1})\no
\eea
and
\bs
\bea
&&\int_{-\infty}^{\infty}\int_{-\infty}^{\infty}K_{\mathrm{sine}}^2(x,y)\left(f^2(x)+2f(x)\right)\left(f^2(y)+2f(y)\right)dxdy\nonumber\\
&-&\int_{-\infty}^{\infty}\int_{-\infty}^{\infty}K_{\mathrm{sine}}(x,y)\left[\int_{-\infty}^{\infty}\left(1-2\chi_{(-\infty,x)}(z)\right)K_{\mathrm{sine}}(y,z)f(z)dz\right]
\nonumber\\
&&f'(x)\left(f^2(y)+2f(y)\right)dxdy+\frac{1}{\pi^2}\int_{-\infty}^{\infty}\int_{-\infty}^{\infty}\mathrm{Si}^2(x-y)f'(x)f'(y)(2f(y)+1)dxdy\nonumber\\
&+&\frac{1}{\pi}\int_{-\infty}^{\infty}\int_{-\infty}^{\infty}\mathrm{Si}(x-y)\left[\int_{-\infty}^{\infty}\left(1-2\chi_{(-\infty,x)}(z)\right)
K_{\mathrm{sine}}(y,z)f(z)dz\right]f'(x)f'(y)dxdy\nonumber\\
&+&\frac{1}{4}\int_{-\infty}^{\infty}\int_{-\infty}^{\infty}\left[\int_{-\infty}^{\infty}\left(1-2\chi_{(-\infty,x)}(z)\right)
K_{\mathrm{sine}}(y,z)f(z)dz\right]\nonumber\\
&&\left[\int_{-\infty}^{\infty}\left(1-2\chi_{(-\infty,y)}(u)\right)
K_{\mathrm{sine}}(x,u)f(u)du\right]f'(x)f'(y)dxdy+O(N^{-1})\nonumber
\eea
\es%
respectively,
where we have used integration by parts and $\chi_{J}(x)$ is the characteristic function of the interval $J$, i.e., $\chi_{J}(x)=1$ for $x\in J$ and $0$ otherwise.
\begin{remark}
We assume that $f(\cdot)$ is smooth and sufficiently decreasing at $\pm\infty$ to make the integrals well-defined.
\end{remark}
With the relation of $f(x)$ and $F(x)$ in (\ref{fx}), we find from (\ref{logdet1}) that
\bs
\bea
\log\det(I+T_{\mathrm{JOE}})&=&-2\lambda\int_{-\infty}^{\infty}K_{\mathrm{sine}}(x,x)F(x)dx+\frac{\lambda^2}{2}\bigg\{4\int_{-\infty}^{\infty}K_{\mathrm{sine}}(x,x)F^2(x)dx
\nonumber\\
&-&\int_{-\infty}^{\infty}\left[\int_{-\infty}^{\infty}\left(1-2\chi_{(-\infty,x)}(y)\right)
K_{\mathrm{sine}}(x,y)F(y)dy\right]F'(x)dx\nonumber\\
&-&4\int_{-\infty}^{\infty}\int_{-\infty}^{\infty}K_{\mathrm{sine}}^2(x,y)F(x)F(y)dxdy\nonumber\\
&-&\frac{1}{\pi^2}\int_{-\infty}^{\infty}\int_{-\infty}^{\infty}\mathrm{Si}^2(x-y)F'(x)F'(y)dxdy\bigg\}+\cdots+O(N^{-1}).\nonumber
\eea
\es
Taking account of the fact that $\log G_{N}^{(1)}(f)=\frac{1}{2}\log\det(I+T_{\mathrm{JOE}})$ from Theorem \ref{thm}, we have the following heuristic result.
\begin{theorem}
Letting $\mu_{N}^{(\mathrm{JOE})}$ and $\mathcal{V}_{N}^{(\mathrm{JOE})}$ be the mean and variance of the scaled linear statistics
$\sum_{j=1}^{N}F(Nx_j)$, we have as $N\rightarrow\infty$,
$$
\mu_{N}^{(\mathrm{JOE})}=\mu_{N}^{(\mathrm{JUE})}+O(N^{-1}),
$$
\bea
\mathcal{V}_{N}^{(\mathrm{JOE})}
&=&2\mathcal{V}_{N}^{(\mathrm{JUE})}-\frac{1}{2}\int_{-\infty}^{\infty}\left[\int_{-\infty}^{\infty}\left(1-2\chi_{(-\infty,x)}(y)\right)
K_{\mathrm{sine}}(x,y)F(y)dy\right]F'(x)dx\nonumber\\
&-&\frac{1}{2\pi^2}\int_{-\infty}^{\infty}\int_{-\infty}^{\infty}\mathrm{Si}^2(x-y)F'(x)F'(y)dxdy+O(N^{-1}),\nonumber
\eea
where $\mu_{N}^{(\mathrm{JUE})}$ and $\mathcal{V}_{N}^{(\mathrm{JUE})}$ are given by (\ref{juem}) and (\ref{juev}), respectively.
\end{theorem}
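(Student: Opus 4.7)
The plan is to apply the trace-log expansion (\ref{logdet1}) to $\log\det(I+T_{\mathrm{JOE}})$ and evaluate the first two traces under the bulk scaling $x\mapsto x/N$. Since $\log G_{N}^{(1)}(f)=\tfrac{1}{2}\log\det(I+T_{\mathrm{JOE}})$ by Theorem \ref{thm}, substituting the expansion (\ref{fx}) of $f$ in terms of $F$ and reading off the coefficients of $\lambda$ and $\lambda^{2}$ will produce the mean and variance respectively. The sine-kernel limit of $S_{N}^{(1)}$ from Theorem \ref{joe1} should deliver the leading contributions, while the rank-one perturbations in (\ref{tjoe}) involving $\varepsilon\varphi_{N}^{(1)}$ and $\varepsilon\varphi_{N-1}^{(1)}$ should contribute at the $O(N^{-1})$ level because of their $N^{-1}$ size coming from Theorem \ref{joe2}.

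For $\mathrm{Tr}\:T_{\mathrm{JOE}}$, I would write out the six summands from (\ref{tjoe}) separately and change the variable so that $f(\cdot)\mapsto f(N\cdot)$. The principal piece $\mathrm{Tr}\:S_{N}^{(1)}(f^{2}+2f)$ converges to $\int_{\mathbb{R}}K_{\mathrm{sine}}(x,x)(f^{2}+2f)\,dx$. The term $\mathrm{Tr}\:S_{N}^{(1)}\varepsilon f'$ is handled by writing $\varepsilon f'(x)=\tfrac{1}{2}\bigl(\int_{-\infty}^{x}-\int_{x}^{\infty}\bigr)f'(y)\,dy$, producing the integrand with the characteristic-function kernel $1-2\chi_{(-\infty,x)}(y)$; and the three remaining rank-one summands are each $O(N^{-1})$ because the scaled $\varepsilon\varphi_{N}^{(1)}(x/N)$, $\varepsilon\varphi_{N-1}^{(1)}(x/N)$ are of size $N^{-1}$ while $C_{N}^{(1)}\sim N$ and $\varphi_{N}^{(1)}, \varphi_{N-1}^{(1)}$ are $O(1)$, so the net contribution is $O(1)\cdot O(N^{-1})=O(N^{-1})$.

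For $\mathrm{Tr}\:T_{\mathrm{JOE}}^{2}$, I would expand the square into all cross products and classify them by the number of rank-one factors. The $S_{N}^{(1)}$--only products yield the $K_{\mathrm{sine}}^{2}$ double-integrals and, after repeated integration by parts in $\varepsilon f'$, the $\mathrm{Si}$-kernel terms via the identity already used in Section~3.2. Cross products with one or two rank-one factors are expected to be $O(N^{-1})$: a single rank-one factor gives a pairing $\varepsilon\varphi\cdot\varphi=O(N^{-1})$, while two rank-one factors contribute $(C_{N}^{(1)})^{2}\cdot(\varepsilon\varphi)^{2}\cdot\varphi^{2}=O(N^{2})\cdot O(N^{-2})\cdot O(1)=O(1)$ pointwise but get absorbed into $O(N^{-1})$ after the oscillations of $\varphi_{N}^{(1)}$ and $\varphi_{N-1}^{(1)}$ average out against the smooth test function. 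Once the leading contributions are isolated, substituting (\ref{fx}) into $\mathrm{Tr}\:T_{\mathrm{JOE}}-\tfrac{1}{2}\mathrm{Tr}\:T_{\mathrm{JOE}}^{2}+\cdots$ and collecting the $\lambda$ and $\lambda^{2}$ coefficients directly gives the stated formulas upon comparison with (\ref{juem}) and (\ref{juev}).

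The main obstacle is the bookkeeping in $\mathrm{Tr}\:T_{\mathrm{JOE}}^{2}$: verifying rigorously that every cross-term involving oscillatory rank-one factors $\varphi_{N}^{(1)},\varphi_{N-1}^{(1)},\varepsilon\varphi_{N}^{(1)},\varepsilon\varphi_{N-1}^{(1)}$ is truly $O(N^{-1})$ after the change of variable, so that only the $S_{N}^{(1)}$--type contributions survive at leading order. Controlling the oscillations of $\varphi_{N}^{(1)}$ and $\varphi_{N-1}^{(1)}$ against a smooth test function, in particular in products of the form $\int\varphi_{N}^{(1)}(x/N)\varphi_{N-1}^{(1)}(x/N)f(x)\,dx$ where the phases nearly cancel, requires care; a Riemann--Lebesgue-type argument based on the explicit sine asymptotics in Theorem \ref{joe2} should suffice, but this is where a genuinely rigorous (non-heuristic) treatment would have to invest the most effort.
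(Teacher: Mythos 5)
Your plan is essentially the paper's own argument: expand $\log\det(I+T_{\mathrm{JOE}})$ via the trace--log series (\ref{logdet1}), evaluate $\mathrm{Tr}\,T_{\mathrm{JOE}}$ and $\mathrm{Tr}\,T_{\mathrm{JOE}}^{2}$ under the scaling $x\mapsto x/N$ using Theorems \ref{joe1} and \ref{joe2}, discard the rank-one pieces as lower order, then substitute (\ref{fx}) and read off the $\lambda$ and $\lambda^{2}$ coefficients with the factor $\tfrac{1}{2}$ from Theorem \ref{thm}. One bookkeeping correction: the surviving $\chi_{(-\infty,x)}$-kernel term in the variance comes from $-\mathrm{Tr}\,S_{N}^{(1)}f\varepsilon f'$ (the summand of (\ref{tjoe}) you do not address explicitly), whereas the term $\mathrm{Tr}\,S_{N}^{(1)}\varepsilon f'$ that you credit for it is itself $O(N^{-1})$ by the identity $\int_{x}^{\infty}K_{\mathrm{sine}}(x,y)\,dy-\int_{-\infty}^{x}K_{\mathrm{sine}}(x,y)\,dy=0$ already used in the JSE bulk computation.
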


\subsection{Scaling at the Edge of the Spectrum in JOE}
Similarly as Theorem \ref{jse3} and \ref{jse4}, we have the following results.
\begin{theorem}\label{joe3}
For $x, y\in\mathbb{R}^{+}$, we have as $N\rightarrow\infty$,
$$
\frac{1}{2N^2}S_{N}^{(1)}\left(1-\frac{x}{2N^2},1-\frac{y}{2N^2}\right)
=\sqrt{\frac{x}{y}}\:K_{\mathrm{Bessel}}^{(a+1)}(x,y)+O(N^{-1}),
$$
where $K_{\mathrm{Bessel}}^{(a+1)}(x,y)$ is the Bessel kernel of order $a+1$ given by
$$
K_{\mathrm{Bessel}}^{(a+1)}(x,y)=\frac{J_{a+1}(\sqrt{x})\sqrt{y}J_{a+1}'(\sqrt{y})-J_{a+1}'(\sqrt{x})\sqrt{x}J_{a+1}(\sqrt{y})}{2(x-y)}.
$$
The error term is uniform for $x$ and $y$ in compact subsets of $\mathbb{R}^{+}$.
\end{theorem}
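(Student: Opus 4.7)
The plan is to reduce Theorem \ref{joe3} to the JUE hard-edge scaling of Theorem \ref{jue2} applied with a shift of Jacobi parameters from $(a,b)$ to $(a+1,b+1)$. The key observation is that the factor $(1-x^2)$ built into $S_N^{(1)}$ makes the functions $\{\sqrt{1-x^2}\,\varphi_j^{(1)}(x)\}_{j\ge 0}$ orthonormal in $L^2[-1,1]$: from the definition of $\varphi_j^{(1)}$ and the orthogonality of $\{P_j^{(a+1,b+1)}\}$ with respect to the weight $(1-x)^{a+1}(1+x)^{b+1}$ (which is an admissible Jacobi weight precisely under the JOE hypothesis $a,b>-2$), one gets $\int_{-1}^{1}(1-x^2)\,\varphi_j^{(1)}(x)\varphi_k^{(1)}(x)\,dx=\delta_{jk}$.

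With this in hand, the next step is to factor the kernel as
$$S_N^{(1)}(x,y)=\sqrt{\frac{1-x^2}{1-y^2}}\,\widehat{K}_N(x,y),\qquad \widehat{K}_N(x,y):=\sum_{j=0}^{N-1}\sqrt{(1-x^2)(1-y^2)}\,\varphi_j^{(1)}(x)\varphi_j^{(1)}(y).$$
The reshuffled summand $\sqrt{1-x^2}\,\varphi_j^{(1)}(x)=P_j^{(a+1,b+1)}(x)(1-x)^{(a+1)/2}(1+x)^{(b+1)/2}/\sqrt{h_j^{(a+1,b+1)}}$ is precisely the orthonormal function $\varphi_j$ of Section 2.1 for the shifted weight $(1-x)^{a+1}(1+x)^{b+1}$, so $\widehat{K}_N$ coincides with the JUE Christoffel-Darboux kernel $K_N^{(2)}$ constructed with parameters $(a+1,b+1)$ in place of $(a,b)$. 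I would then invoke Theorem \ref{jue2} with $a$ replaced by $a+1$; its proof uses only (\ref{cd}) and (\ref{asy2}), both valid for the shifted parameters, so the same argument yields
$$\frac{1}{2N^2}\widehat{K}_N\left(1-\frac{x}{2N^2},1-\frac{y}{2N^2}\right)=K_{\mathrm{Bessel}}^{(a+1)}(x,y)+O(N^{-1}),$$
uniformly for $x,y$ in compact subsets of $\mathbb{R}^+$.

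To finish, I would evaluate the scalar prefactor under the hard-edge scaling. Since $1-(1-\frac{x}{2N^2})^2=\frac{x}{N^2}(1-\frac{x}{4N^2})$, and similarly for $y$, one obtains $\sqrt{(1-x^2)/(1-y^2)}=\sqrt{x/y}\,(1+O(N^{-2}))$ uniformly on compact subsets of $\mathbb{R}^+$. Multiplying the factorized identity for $S_N^{(1)}$ by $1/(2N^2)$ and combining with the previous display yields the claimed asymptotic, with the $O(N^{-2})$ correction from the prefactor absorbed into the leading $O(N^{-1})$ error from the kernel. There is no serious obstacle: the only conceptual step is the initial orthonormality observation that reinterprets $\sqrt{1-x^2}\,\varphi_j^{(1)}$ as the natural orthonormal system for the shifted Jacobi weight, after which the statement is a direct corollary of Theorem \ref{jue2} and one need not redo any Christoffel-Darboux or Hilb-type analysis. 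The mild care required is in verifying that the Stirling-type error from Theorem \ref{jue2} at shifted parameters remains $O(N^{-1})$, but this is immediate since those estimates depend on $(a,b)$ only through continuous combinations of Gamma factors.
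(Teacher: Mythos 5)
Your proposal is correct and is essentially the route the paper intends: the authors omit the proof of Theorem \ref{joe3} with the remark ``similarly as Theorem \ref{jse3}'', and the $\sqrt{x/y}$ prefactor in the statement comes exactly from your factorization $S_N^{(1)}(x,y)=\sqrt{(1-x^2)/(1-y^2)}\,K_N^{(2)}(x,y)\big|_{(a,b)\to(a+1,b+1)}$ under the scaling $1-(1-\tfrac{x}{2N^2})^2=\tfrac{x}{N^2}(1+O(N^{-2}))$, after which Theorem \ref{jue2} (whose proof via the Christoffel--Darboux formula and the Hilb-type asymptotics applies verbatim with parameters $(a+1,b+1)$, admissible since $a,b>-2$) gives the Bessel-kernel limit with the stated uniform $O(N^{-1})$ error.
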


\begin{theorem}\label{joe4}
For $x\in\mathbb{R}^{+}$, we have as $N\rightarrow\infty$,
$$
\varphi_{N}^{(1)}\left(1-\frac{x}{2N^2}\right)
=N^{3/2}\:\frac{J_{a+1}(\sqrt{x})}{\sqrt{x}}+O(N^{1/2}),
$$
$$
\varphi_{N-1}^{(1)}\left(1-\frac{x}{2N^2}\right)
=N^{3/2}\:\frac{J_{a+1}(\sqrt{x})}{\sqrt{x}}+O(N^{1/2}),
$$
$$
\varepsilon\varphi_{N}^{(1)}\left(1-\frac{x}{2N^2}\right)
=2^{-1}N^{-1/2}\left(1-2\mathbf{J}_{a+1}(\sqrt{x})\right)+O(N^{-3/2}),
$$
$$
\varepsilon\varphi_{N-1}^{(1)}\left(1-\frac{x}{2N^2}\right)
=2^{-1}N^{-1/2}\left(1-2\mathbf{J}_{a+1}(\sqrt{x})\right)+O(N^{-3/2}),
$$
where
\be\label{jx1}
\mathbf{J}_{a+1}(x)=\int_{0}^{x}J_{a+1}(t)dt.
\ee
The error terms are uniform for $x$ in compact subsets of $\mathbb{R}^{+}$.
\end{theorem}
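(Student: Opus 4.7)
The plan is to mirror the structure of the proof of Theorem \ref{jse4}, adjusting for the shifted indices $(a+1,b+1)$ (rather than $(a-1,b-1)$) of the Jacobi polynomials entering $\varphi_j^{(1)}$ and for the new edge scaling $1-x/(2N^2)$. The starting point is the Hilb-type asymptotic formula (\ref{asy2}). Setting $\cos\theta = 1-x/(2N^2)$ yields $\theta = \sqrt{x}/N + O(N^{-3})$, hence $(N+(a+b+3)/2)\theta \to \sqrt{x}$, $\sin(\theta/2)\sim \sqrt{x}/(2N)$, $\cos(\theta/2)\sim 1$, and $(\theta/\sin\theta)^{1/2}\to 1$. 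Inverting (\ref{asy2}) with $(a,b)$ replaced by $(a+1,b+1)$ and $n=N$ gives the large-$N$ behavior of $P_N^{(a+1,b+1)}(1-x/(2N^2))$ in terms of $J_{a+1}(\sqrt{x})$, and Stirling's formula applied to
$$
h_N^{(a+1,b+1)} = \frac{2^{a+b+3}\Gamma(N+a+2)\Gamma(N+b+2)}{N!(2N+a+b+3)\Gamma(N+a+b+3)} \sim \frac{2^{a+b+2}}{N}
$$
handles the normalizing constant. Combining these with the edge values of $(1-y)^{a/2}(1+y)^{b/2}$ and collecting powers of $2$ and $N$ yields the claimed leading term $N^{3/2}J_{a+1}(\sqrt{x})/\sqrt{x}$. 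Exactly the same calculation with $n=N-1$ produces the identical leading-order expression, since shifting $N\to N-1$ affects the argument of $J_{a+1}$ only through $O(N^{-1})$ corrections inside the Bessel function, which are absorbed into the $O(N^{1/2})$ remainder.

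For the $\varepsilon$-transforms I would write
$$
\varepsilon\varphi_N^{(1)}\Bigl(1-\frac{x}{2N^2}\Bigr)
= \frac{1}{2}\int_{-1}^{1}\varphi_N^{(1)}(y)\,dy - \int_{1-x/(2N^2)}^{1}\varphi_N^{(1)}(y)\,dy,
$$
then substitute $y = 1 - s/(2N^2)$ on each integral. The boundary integral becomes $(2N^2)^{-1}\int_0^x \varphi_N^{(1)}(1-s/(2N^2))\,ds$, into which I plug the leading edge asymptotic to obtain
$$
\frac{1}{2N^2}\cdot N^{3/2}\int_0^x \frac{J_{a+1}(\sqrt{s})}{\sqrt{s}}\,ds
= N^{-1/2}\mathbf{J}_{a+1}(\sqrt{x}),
$$
by the change of variables $u=\sqrt{s}$. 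Applying the same substitution to the full integral and letting the upper limit $x$ tend to infinity, the identity $\int_0^\infty J_{a+1}(t)\,dt = 1$ (valid since $a>-2$) together with Riemann–Lebesgue-type cancellation of the oscillatory bulk contribution produces $\int_{-1}^{1}\varphi_N^{(1)}(y)\,dy = N^{-1/2}+O(N^{-3/2})$. Assembling the two pieces gives $2^{-1}N^{-1/2}(1-2\mathbf{J}_{a+1}(\sqrt{x}))$ as claimed, and the $\varphi_{N-1}^{(1)}$ case is treated identically.

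The main obstacle is the justification that the bulk and left-edge portions of $\int_{-1}^{1}\varphi_N^{(1)}(y)\,dy$ contribute only $O(N^{-3/2})$, since over the bulk $\varphi_N^{(1)}$ is itself $O(1)$ and the cancellation comes purely from the rapid oscillation given by Theorem \ref{joe2}. In the spirit of the heuristic asymptotic analysis announced in the introduction, I would argue this by stationary-phase/Riemann–Lebesgue and by a careful bookkeeping of the Hilb-type remainder $\theta^{1/2}O(n^{-3/2})$ against the $(2N^2)^{-1}$ prefactor; this is the same step for which the analogous claim in Theorem \ref{jse4} is only asserted as "direct computation."
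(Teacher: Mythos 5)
Your proposal is correct and follows essentially the same route as the paper, which proves the analogous JSE statement (Theorem \ref{jse4}) by "direct computations" from the Hilb-type formula (\ref{asy2}) together with the identity $\int_{x}^{\infty}\frac{J_{\nu}(\sqrt{y})}{\sqrt{y}}dy-\int_{0}^{x}\frac{J_{\nu}(\sqrt{y})}{\sqrt{y}}dy=2\left(1-2\mathbf{J}_{\nu}(\sqrt{x})\right)$ and $\int_{0}^{\infty}J_{\nu}(t)dt=1$, and merely asserts Theorem \ref{joe4} "similarly." Your splitting of $\varepsilon\varphi_{N}^{(1)}$ as $\frac{1}{2}\int_{-1}^{1}-\int_{x_0}^{1}$ is a trivially equivalent rearrangement of the paper's use of that identity, and your acknowledged gap (controlling the oscillatory bulk contribution) is precisely the step the paper also leaves at the heuristic level.
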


Using Theorem \ref{joe3} and \ref{joe4} to compute $\mathrm{Tr}\:T_{\mathrm{JOE}}$ and $\mathrm{Tr}\:T_{\mathrm{JOE}}^{2}$ from (\ref{tjoe}), and changing $f(x)$ to $f(2N^2(1-x))$, we obtain the next theorem following the similar heuristic procedure in Section 4.2. (We assume that $f(\cdot)$ is smooth and sufficiently decreasing at infinity.)
\begin{theorem}
Denoting by $\tilde{\mu}_{N}^{(\mathrm{JOE})}$ and $\tilde{\mathcal{V}}_{N}^{(\mathrm{JOE})}$ the mean and variance of the linear statistics
$\sum_{j=1}^{N}F(2N^2(1-x_j))$, we have as $N\rightarrow\infty$,
$$
\tilde{\mu}_{N}^{(\mathrm{JOE})}=\tilde{\mu}_{N}^{(\mathrm{JUE},a+1)}+\frac{1}{4}\int_{0}^{\infty}L^{(a+1)}(x,x)F'(x)dx+O(N^{-1}),
$$
\bs
\bea
\tilde{\mathcal{V}}_{N}^{(\mathrm{JOE})}
&=&2\tilde{\mathcal{V}}_{N}^{(\mathrm{JUE},a+1)}+\frac{1}{2}\int_{0}^{\infty}L^{(a+1)}(x,x)F(x)F'(x)dx\nonumber\\
&-&\frac{1}{2}\int_{0}^{\infty}\left[\int_{0}^{\infty}\left(1-2\chi_{(0,x)}(y)\right)\sqrt{\frac{x}{y}}\:K_{\mathrm{Bessel}}^{(a+1)}(x,y)F(y)dy\right]F'(x)dx\nonumber\\
&-&\frac{1}{16}\int_{0}^{\infty}\left[\int_{0}^{\infty}\left(1-2\chi_{(0,x)}(y)\right)\frac{J_{a+1}(\sqrt{y})}{\sqrt{y}}F(y)dy\right]
\frac{J_{a+1}(\sqrt{x})}{\sqrt{x}}F(x)dx\nonumber\\
&-&\frac{1}{2}\int_{0}^{\infty}\int_{0}^{\infty}\frac{J_{a+1}(\sqrt{x})}{\sqrt{y}}K_{\mathrm{Bessel}}^{(a+1)}(x,y)\left(1-2\mathbf{J}_{a+1}(\sqrt{y})\right)F(x)F(y)dxdy\nonumber\\
&+&\frac{1}{8}\int_{0}^{\infty}\int_{0}^{\infty}\frac{J_{a+1}(\sqrt{x})}{\sqrt{x}}\left(1-2\mathbf{J}_{a+1}(\sqrt{y})\right)\left(L ^{(a+1)}(x,y)-L^{(a+1)}(y,x)\right)\no\\
&&F(x)F'(y)dxdy-\frac{1}{8}\int_{0}^{\infty}\int_{0}^{\infty}L^{(a+1)}(x,y)L^{(a+1)}(y,x)F'(x)F'(y)dxdy\nonumber\\
&-&\frac{1}{16}\int_{0}^{\infty}\int_{0}^{\infty}\left(1-2\mathbf{J}_{a+1}(\sqrt{x})\right)\left(1-2\mathbf{J}_{a+1}(\sqrt{y})\right)L^{(a+1)}(x,y)F'(x)F'(y)dxdy\no\\
&-&\int_{0}^{\infty}\int_{0}^{\infty}\sqrt{\frac{x}{y}}\:K_{\mathrm{Bessel}}^{(a+1)}(x,y)L^{(a+1)}(x,y)F'(x)F(y)dxdy+O(N^{-1}),\nonumber
\eea
\es
where
$$
L^{(a+1)}(x,y)=\int_{0}^{x}\sqrt{\frac{y}{z}}\:K_{\mathrm{Bessel}}^{(a+1)}(y,z)dz-\int_{x}^{\infty}\sqrt{\frac{y}{z}}\:K_{\mathrm{Bessel}}^{(a+1)}(y,z)dz,
$$
$\mathbf{J}_{a+1}(x)$ is defined in (\ref{jx1}) and $\tilde{\mu}_{N}^{(\mathrm{JUE},a+1)}$ and $\tilde{\mathcal{V}}_{N}^{(\mathrm{JUE},a+1)}$ are given by (\ref{juem2}) and (\ref{juev2}) with $a$ replaced by $a+1$.
\end{theorem}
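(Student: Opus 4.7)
The plan is to follow the same blueprint used in Section 3.3 for the JSE edge case and Section 4.2 for the JOE bulk case. Starting from the Fredholm determinant $[G_{N}^{(1)}(f)]^{2}=\det(I+T_{\mathrm{JOE}})$ in Theorem \ref{thm}, I would expand $\log\det(I+T_{\mathrm{JOE}})$ via the trace-log formula (\ref{logdet1}), retaining only the first two traces since, after substituting $f(x)=-\lambda F(x)+\tfrac{\lambda^{2}}{2}F^{2}(x)-\cdots$ from (\ref{fx}), higher traces contribute only at order $\lambda^{3}$ and above and thus do not affect mean or variance. The scaling $f(x)\mapsto f(2N^{2}(1-x))$ puts the operator at the hard edge, where the limiting kernels are supplied by Theorems \ref{joe3} and \ref{joe4}.

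First I would compute $\mathrm{Tr}\,T_{\mathrm{JOE}}$ by treating each of the six constituent operators in (\ref{tjoe}) separately. After a change of variable $x\mapsto 1-x/(2N^{2})$ and an application of Theorems \ref{joe3}-\ref{joe4}, each trace reduces to an integral against $K_{\mathrm{Bessel}}^{(a+1)}$ or against $J_{a+1}(\sqrt{x})/\sqrt{x}$ and its antiderivative $\mathbf{J}_{a+1}$. The rank-one contributions scale correctly because $C_{N}^{(1)}\sim N^{2}$ balances the orders $\varphi_{N}^{(1)}(1-x/(2N^{2}))=O(N^{3/2})$ and $\varepsilon\varphi_{N}^{(1)}(1-x/(2N^{2}))=O(N^{-1/2})$ from Theorem \ref{joe4}. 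Integration by parts, combined with the defining identity for $L^{(a+1)}(x,y)$ in the theorem statement and the identity $\int_{0}^{\infty}J_{a+1}(t)dt=1$ used in the proof of Theorem \ref{joe4}, rewrites terms containing $\varepsilon f'$ in the normal form of the theorem.

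Next I would compute $\mathrm{Tr}\,T_{\mathrm{JOE}}^{2}$, which is the combinatorially heaviest step. Squaring the six-term operator gives a sum of traces indexed by pairs of terms; many cross terms collapse either because the rank-one factors produce an inner product of type $\int \varepsilon\varphi_{N}^{(1)}(x)\varphi_{N-1}^{(1)}(x)\,dx$ that can be evaluated asymptotically using the formula $\int_{x}^{\infty}\tfrac{J_{a+1}(\sqrt{y})}{\sqrt{y}}dy-\int_{0}^{x}\tfrac{J_{a+1}(\sqrt{y})}{\sqrt{y}}dy=2(1-2\mathbf{J}_{a+1}(\sqrt{x}))$, or because the product of two rank-one operators has a prefactor that is merely $O(N^{-1})$ and thus drops out. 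The surviving traces produce the Bessel-kernel, $L^{(a+1)}$, and $\mathbf{J}_{a+1}$ structures on the right-hand side of the stated formula. The final step is to substitute (\ref{fx}), collect the coefficients of $\lambda$ and $\lambda^{2}$, and use $\log G_{N}^{(1)}(f)=\tfrac{1}{2}\log\det(I+T_{\mathrm{JOE}})$ (from Theorem \ref{thm}) to read off $\tilde{\mu}_{N}^{(\mathrm{JOE})}$ and $\tilde{\mathcal{V}}_{N}^{(\mathrm{JOE})}$, recognising $\tilde{\mu}_{N}^{(\mathrm{JUE},a+1)}$ and $\tilde{\mathcal{V}}_{N}^{(\mathrm{JUE},a+1)}$ via (\ref{juem2}) and (\ref{juev2}) with parameter shifted from $a$ to $a+1$.

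The main obstacle I expect is the bookkeeping in $\mathrm{Tr}\,T_{\mathrm{JOE}}^{2}$. Unlike the JSE edge case, the JOE operator contains the quadratic block $S_{N}^{(1)}(f^{2}+2f)$ and the non-linear block $S_{N}^{(1)}f\varepsilon f'$, so each of these generates $\lambda^{2}$ contributions already at the linear trace level while simultaneously inflating the number of cross terms at the quadratic level. Ensuring that every surviving term is assigned to the correct line of the theorem -- without missing or double-counting any, and with the signs produced by $\varepsilon^{t}=-\varepsilon$ and by the symmetry swap $(x,y)\leftrightarrow(y,x)$ correctly accounted for after integration by parts -- is the delicate part. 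As noted in the introduction, the argument remains heuristic at the level of error estimates.
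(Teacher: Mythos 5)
Your proposal follows exactly the paper's own (sketched) procedure: expand $\log\det(I+T_{\mathrm{JOE}})$ by the trace--log formula (\ref{logdet1}), evaluate $\mathrm{Tr}\,T_{\mathrm{JOE}}$ and $\mathrm{Tr}\,T_{\mathrm{JOE}}^{2}$ term by term at the hard-edge scaling using Theorems \ref{joe3} and \ref{joe4}, simplify with integration by parts and the definition of $L^{(a+1)}$, then substitute (\ref{fx}) and read off the $\lambda$ and $\lambda^{2}$ coefficients via $\log G_{N}^{(1)}(f)=\tfrac{1}{2}\log\det(I+T_{\mathrm{JOE}})$. One minor slip: $C_{N}^{(1)}\sim N/2$ rather than $N^{2}$ (this is precisely what renders the rank-one traces $O(1)$ after the change of variables, since the factor $\tfrac{1}{2N^{2}}$ from $dx$ combines with $O(N^{-1/2})\cdot O(N^{3/2})$ from Theorem \ref{joe4}), but this does not affect the structure or validity of your argument.
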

\begin{remark}
We have used integration by parts to simplify the results in the above theorem.
\end{remark}

\begin{acknowledgement}
We would like to thank Professor E. Basor, Professor A. B\"{o}ttcher and Professor T. Ehrhardt for giving many useful comments.
The work of C. Min was partially supported by the National Natural Science Foundation of China under grant number 12001212, by the Fundamental Research Funds for the Central Universities under grant number ZQN-902 and by the Scientific Research Funds of Huaqiao University under grant number 17BS402. The work of
Y. Chen was partially supported by the Macau Science and Technology Development Fund under grant number FDCT 0079/2020/A2.
\end{acknowledgement}

\end{document}